\newcommand{\textred}[1]{\textcolor{red}{#1}}
  \newcommand{\pgwrapper}[2]{\textred{#1: #2}}
  \newcommand{\pgwrapper}[2]{}
\newtheorem{remark}{Remark}
\newtheorem{lemma}{Lemma}
\newtheorem{theorem}{Theorem}
\newtheorem{example}{Example}
\newcommand{\PP}{\mathbb{P}}
\newcommand{\E}{\mathbbm{E}}
\newcommand{\var}{\mathrm{Var}}
\newcommand{\py}{\widebar{P_{Y^n}}}
\newcommand{\pyl}{\widebar{P_{Y^n}^{(\lambda)}}}
\newcommand{\qim}{Q^n_{i,m}}
\newcommand{\qiml}{{Q_{i,m}^n}_{\lambda_i}}
\newcommand{\hsig}{\hat{\sigma}}
\newcommand{\Hsig}{\hat{\Sigma}}
\newcommand{\widesim}[2][1.5]{
  \mathrel{\overset{#2}{\scalebox{#1}[1]{$\sim$}}}
}
\newcommand{\simiid}{\widesim{\text{i.i.d}}}
\begin{document}

\title{The Strongly Asynchronous  Massive Access Channel}
\author{
\IEEEauthorblockN{
Sara Shahi, Daniela Tuninetti and Natasha Devroye\\}
\IEEEauthorblockA{%
University of Illinois at Chicago, Chicago IL 60607, USA. \\
Email: {\tt sshahi7, danielat, devroye @uic.edu}}%
}

\maketitle

\begin{abstract}
This paper considers a Strongly Asynchronous and Slotted Massive Access Channel (SAS-MAC) where $K_n:=e^{n\nu}$ different users transmit a randomly selected message among $M_n:=e^{nR}$ ones within a strong asynchronous window of length  $A_n:=e^{n\alpha}$ blocks, where each block lasts $n$ channel uses. 
A global probability of error is enforced, ensuring that all the users' identities and messages are correctly identified and decoded. Achievability bounds are derived for the case that different users have similar channels, the case that users' channels can be chosen from a set which has polynomially many elements in the blocklength $n$,  and the case with no restriction on the users' channels.  A general converse bound on the capacity region and a converse bound on the maximum growth rate of the number of users are derived. 
\end{abstract}

\section{Introduction}
\label{sec:intro}
In the Internet of Things (IoT) paradigm it is envisioned that many types of 
devices will be wirelessly connected. 
A foundational study to understand the fundamental tradeoffs and thus enable the successful deployment of ubiquitous, interconnected wireless network is needed. This new paradigm imposes new traffic patterns on the wireless network. Moreover,   devices within such a network often have strict energy consumption constraints, as they are often battery powered  sensors  transmitting bursts of data very infrequently to an access point. 
Finally, as the name suggests, these networks must support a huge number of  inter-connected devices.  

Due to these new network characteristics, we propose  a novel communication and multiple-access model: the Strongly Asynchronous Slotted Massive Access (SAS-MAC). In a SAS-MAC, the number of users $K_n:=e^{n\nu}$ increases exponentially with blocklength $n$ with {\it occupancy} exponent $\nu\geq 0$. Moreover, the users are strongly asynchronous, i.e., they transmit in one randomly chosen time slot within a window of length $A_n:=e^{n\alpha}$ blocks, each block of length $n$, where $\alpha\geq 0$ is the {\it asynchronous}  exponent. In addition, when active, each user can choose from a set of $M_n:=e^{nR}$ messages to transmit.
All transmissions are sent to an access point and the receiver is required to jointly decode and identify {\it all} users. The goal is to characterize the set of all achievable $(R, \alpha, \nu)$ triplets.

\subsection{Past work}
Strongly asynchronous communications were first introduced in~\cite{optimal_sequential} for synchronization of a single user, and later extended in~\cite{aslan-suboptimality} for synchronization with positive transmission rate. 
 
In~\cite{aslan-remark-it} the authors of~\cite{aslan-suboptimality} made a brief remark about a ``multiple access collision channel'' extension of their original single-user model. 
In this model, any collision of users (i.e., users who happen to transmit in the same block) is assumed to result in  output symbols that appear as if generated by noise. The error metric is taken to be the {\it per user} probability of error, which is required to be vanishing  for all but a vanishing fraction of users. 
In this scenario, it is fairly easy to quantify the capacity region for the case that the number of users are less than the square root of the asynchronous window length  (i.e., in our notation $\nu < \alpha/2$). However, finding the capacity of the ``multiple access collision channel'' for {\it global / joint} probability of error, as opposed to per user probability of error, is much more complicated and requires novel achievability schemes and novel analysis tools. 
This is the main subject and contribution of this paper.

Recently, motivated by the emerging machine-to-machine type communications and sensor networks, a large body of work has studied ``many-user'' versions of classical multiuser channels as pioneered 
 in~\cite{chen2017capacity}. 
In~\cite{chen2017capacity} the number of users is allowed to grow linearly with blocklength $n$. A full characterization of the capacity of the synchronous Gaussian (random) many access channel was given~\cite{chen2017capacity}. 
In~\cite{polyanskiy_ManyAccess}, the author studied the synchronous massive random access channel where the total number of users  increases linearly with the blocklength $n$.  However, the users are restricted to use the same codebook and only a per user probability of error is enforced.  In the model proposed here, the users are strongly asynchronous, the number of users grow exponentially with blocklength, and we enforce a global probability of error.

Training based synchronization schemes (the use of pilot signals) was proven to be suboptimal for bursty communications in~\cite{aslan-suboptimality}.  Rather, one can utilize the users' statistics at the receiver for synchronization or user identification purposes. The identification problem (defined in~\cite{kiefer1968sequential}) is a classic problem considered in hypothesis testing. In this problem, a finite number of distinct sources each generates a sequence of i.i.d. samples. The problem is to find the  underlying distribution of each sample sequence, given the constraint that each sequence is generated by a distinct distribution. 

Studies on identification problems all assume a fixed number of sequences. In~\cite{ahlswede2006logarithmically}, authors study  the Logarithmically Asymptotically Optimal (LAO) Testing of identification problem for a  finite number of distributions. In particular,  the identification of only two different objects has been studied in detail,  and one can find the reliability matrix, which consists of the error exponents of all error types. Their optimality criterion is to find the largest error exponent  for a set of error types for given values of the other error type exponents. The same problem with a different optimality criterion was also studied  in~\cite{unnikrishnan2015asymptotically}, where multiple, finite sequences were matched to the source distributions. More specifically, the authors in~\cite{unnikrishnan2015asymptotically} proposed a test for a generalized Neyman-Pearson-like optimality criterion to minimize the rejection probability given that all other error probabilities decay exponentially with a pre-specified slope. 

In this paper, we too allow the number of users to increase in the blocklength. 
We assume that the users are strongly asynchronous and may transmit randomly anytime within a time window that is exponentially large in the blocklength. We require the receiver to recover both the transmitted messages and the users' identities under a global/joint probability of error criteria. By allowing the number of sequences to grow exponentially with the number of samples, the number of different possibilities (or hypotheses), would be doubly exponential in blocklength and the analysis of the optimal decoder becomes much more challenging than  classical (with constant number of distributions) identification problems. These differences in modeling the channel require a number of novel analytical tools.

\subsection{Contribution}
In this paper, we consider the SAS-MAC 
whose number of users increase exponentially with blocklength $n$. In characterizing the capacity of this model, we require its {\it global} probability of error to be vanishing. More specifically our contributions are as follows:
\begin{itemize}
\item
  We define a new massive identification paradigm in which we allow the number of sequences in a classical identification problem to increase exponentially with the sequence blocklength (or sample size). We find asymptotically matching upper and lower bounds on the probability of identification error for this problem. We use this result in our SAS-MAC model to recover the identity of the users.
  

\item 
We propose a new achievability scheme that supports strictly positive values of $(R, \alpha, \nu)$ for identical channels for the users.

\item 
We propose a new achievability scheme for the case that the channels of the users are chosen from a set of conditional distributions. The size of the set increases polynomially in the blocklength $n$. In this case, the channel statistics themselves can be used for user identification.

\item 
 We propose a new achievability scheme without imposing any restrictive assumptions on the users' channels. We show that strictly positive $(R, \alpha, \nu)$ are possible.

\item
We propose a novel converse bound for the capacity of  the SAS-MAC.

\item
  We show that for $\nu>\alpha$, not even reliable synchronization is  possible.

\end{itemize}

These results  were presented in parts in~\cite{shahi2016isit,shahi_asynch_ITW17,shahi_isit18}.

\subsection{Paper organization}
In Section~\ref{sec:identification} we introduce our massive identification model and  present a technical theorem (Theorem~\ref{thm:main identification}) that will be needed later on in the proof of Theorem~\ref{thm:achievable several channels}.  
In Section~\ref{sec:formulation} we introduce the SAS-MAC model and in Section~\ref{sec:results} we present our main results. 
More specifically, we introduce different achievability schemes for different scenarios and a converse technique to derive an upper bound on the capacity of the SAS-MAC. 
Finally, Section~\ref{sec:conclusion} concludes the paper.
Some proofs may be found in the Appendix.

\subsection{Notation}
\label{sec:notation}
Capital letters represent random variables that take on lower case letter values in calligraphic letter alphabets. 
The notation $a_n\doteq e^{nb}$ means $\lim_{n\to \infty}\frac{\log a_n}{n}=b$. 
We write $[M:N]$, where $ M, N\in \mathbb{ Z}, M\leq N$, to denote the set $\{M, M+1, \ldots, N\}$, and $[K]:=[1:K]$. We use $y_j^n :=[y_{j,1},..., y_{j,n}]$, and simply $y^n$ instead of $y_1^n$. 
The binary entropy function is defined by $h(p) := - p\log(p)-(1-p)\log(1-p)$.

\section{Massive Identification Problem }\label{sec:identification}
We first introduce notation specifically used in this  Section and then introduce our model and results.
\subsection{Notation}
When all elements of the random vector $X^n$ are generated i.i.d according to distribution $P$, we denote it as $X^n\simiid P$. We use $S_n$, where $\vert S_n\vert =n!$, to denote the set of all possible permutations of a set of $n$ elements. For a permutation $\sigma\in S_n$, $\sigma_i$ denotes the $i$-th element of the permutation. $\lfloor x\rfloor_r$ is used to denote the remainder of $x$ divided by $r$.  $K_k\left(a_1,\ldots, a_{\binom{k}{2}}\right)$ is the complete graph with $k$ nodes with edge index $ i\in [\binom{k}{2}]$ and edge weights $a_i, \ i\in [\binom{k}{2}]$. We may drop the edge argument and simply write $K_k$ when the edge specification is not needed. 
 A cycle $c$ of length $r$ in $K_k$ may be interchangeably defined by a vector of vertices as $c^{(v)}=\left[v_1,\ldots, v_r\right]$ or by a set of edges $c^{(e)}=\left\{a_1,\ldots, a_r \right\}$ where $a_i$ is the edge between $(v_i, v_{i+1}),\forall i\in [r-1]$ and $a_r$ is that between $(v_r, v_1)$. With this notation, $c^{(v)}(i)$ is then used to indicate the $i$-th vertex of the cycle $c$.
  $C^{(r)}_k$ is used to denote the set of all cycles of length $r$ in the complete graph $K_k\left(a_1,\ldots, a_{\binom{k}{2}}\right)$. The cycle gain, denoted by $G(c)$, for cycle $c=\left\{a_1,\ldots, a_r \right\}\in C^{(r)}_k$ is the product of the edge weights within the cycle $c$, i.e., $G(c) = \prod_{i=1}^ra_i, \ \forall a_i\in c$.  

The Bhatcharrya distance between $P_1$ and $P_2$ is denoted by $B(P_1,P_2):=\sum_{x\in \mathcal{X}}\sqrt{P_1(x)P_2(x)}$. 

 \subsection{Problem Formulation}
Let $P:=\{P_1, \ldots , P_{A}\}, P_i\in \mathcal{P}_{\mathcal{X}},\forall i\in [A]$ consist of $A$ distinct distributions and also let $\Sigma$  be uniformly distributed over $S_A$, the set of permutations of $A$ elements. In addition, assume that we have $A$ independent random vectors $\{X_1^n,X_2^n,\ldots, X_{A}^n\}$ of length $n$ each.  
For $\sigma$, a realization of $\Sigma$, assign the distribution $P_{\sigma_i}^n$ to the random vector $X_i^n, \forall i\in [A]$. 
After observing a sample $x^{nA}=[x_1^n,\ldots, x_A^n]$ of the random vector $X^{nA}=\left[X_1^n,\ldots, X_A^n \right]$, 
 we would like to identify $P_{\sigma_i}, \forall i\in[A]$.
 More specifically, we are interested in finding a permutation $\hsig: \mathcal{X}^{nA} \to S_A$ to indicate that $X_i^n\simiid P_{\hsig_i}, \ \forall i\in [A]$. Let $\Hsig = \hsig(X^{nA})$.

The average probability of error for the set of distributions $P$  is given by 
\begin{align}
P_e^{(n)}&=\PP\left[\hat{\Sigma}\neq \Sigma \right]\notag\\
&=\frac{1}{(A)!}\sum_{\sigma \in S_{A}}\PP\left[\hat{\Sigma} \neq \sigma\vert X_i^n \simiid P_{\sigma_i}, \forall i\in [A] \right]\notag\\
&=\PP\left[\hat{\Sigma} \neq \left(1,2,\ldots, A\right)\big\vert X_i^n \simiid P_{i}, \forall i\in [A] \right].\notag
\end{align}

We say that a set of distributions $P$ is identifiable if $\lim_{n\to \infty}P_e^{(n)}\to 0$. 
\subsection{Condition for Identifiability}
In Theorem~\ref{thm:main identification} we characterize the relation between the number of distributions and the pairwise distance of the distributions for reliable identification. Moreover, we introduce and use a novel graph theoretic technique in the proof of Theorem~\ref{thm:main identification} to analyze the optimal Maximum  Likelihood decoder.
\begin{theorem}\label{thm:main identification}
A sequence of distributions $P=\{P_1,\ldots,P_{A_n}\}$ is identifiable iff
\begin{align}
\lim_{n\to \infty}\sum_{\substack{1\leq i< j\leq A_n}}e^{-2nB(P_i,P_j)}=0.\label{eq:identifiability thm}
\end{align}
\end{theorem}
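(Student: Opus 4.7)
The plan is to prove the iff statement by treating each direction separately, with the bulk of the technical work in the sufficiency (achievability) direction. For sufficiency I would analyze the maximum-likelihood (ML) decoder using the pairwise Bhattacharyya upper bound together with the graph-theoretic cycle-decomposition machinery introduced in the notation subsection. For necessity I would use a genie-aided reduction to binary hypothesis testing between a swap of two indices.

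For sufficiency, by the symmetry of the ML decoder it suffices to bound the conditional error probability assuming $\Sigma=(1,2,\ldots,A_n)$. The pairwise Chernoff--Bhattacharyya bound yields
\begin{align*}
P_e^{(n)} \le \sum_{\sigma \neq \mathrm{id}} \prob\!\left[\prod_i P_{\sigma_i}^n(X_i^n) \ge \prod_i P_i^n(X_i^n) \,\Big|\, \Sigma=\mathrm{id}\right] \le \sum_{\sigma \neq \mathrm{id}} \prod_{i:\,\sigma_i \neq i} B(P_i, P_{\sigma_i})^n.
\end{align*}
Each non-identity $\sigma$ factors into disjoint cycles of length $\ge 2$ in the complete graph $K_{A_n}$ with edge weights $B(P_i,P_j)^n$, so the inner product equals $\prod_{c} G(c)^n$ over those cycles. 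The key step is to show that this sum is controlled by $S_n:=\sum_{i<j} B(P_i,P_j)^{2n}$: a $2$-cycle on $(i,j)$ contributes exactly $B(P_i,P_j)^{2n}$; any $r$-cycle with $r\ge 3$ has $G(c)^n \le (\max_{e\in c} B_e)^{rn} \le (\max_{e\in c} B_e)^{2n}$ since $B(\cdot,\cdot)\in[0,1]$; and the enumeration of disjoint cycle covers in $K_{A_n}$ is handled using the sets $C^{(r)}_k$ combined with AM--GM--type bounds that convert each $G(c)^n$ into a sum of edge weights $B_e^{2n}$. Collecting contributions across all cycle covers yields a bound of the form $P_e^{(n)} \le C\, S_n\,(1+o(1))$, so $S_n \to 0$ forces $P_e^{(n)} \to 0$.

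For necessity, I would fix any pair $(i,j)$ and grant a genie revealing $X_k^n \sim P_k$ for every $k\notin\{i,j\}$, which reduces the problem to binary hypothesis testing between $P_i^n\otimes P_j^n$ and $P_j^n\otimes P_i^n$. Because these product distributions have Bhattacharyya coefficient exactly $B(P_i,P_j)^{2n}$ and the swap symmetry makes $\alpha=1/2$ optimal in the Chernoff optimisation, the tight lower bound for binary Bayes error gives a pairwise lower bound $\gtrsim B(P_i,P_j)^{2n}$ up to sub-exponential factors. To lift this to the full sum, I would use the decomposition $P_e^{(n)}=\sum_{\sigma\neq\mathrm{id}}\prob[\Hsig=\sigma\mid\Sigma=\mathrm{id}]$, lower-bound each transposition term $\sigma=\tau_{ij}$ individually by the genie-aided binary bound, and exploit the disjointness of the events $\{\Hsig=\tau_{ij}\}$ across distinct $(i,j)$ to conclude $P_e^{(n)} \gtrsim S_n$; identifiability therefore forces $S_n\to 0$.

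The main obstacle is the sufficiency step: the number of non-identity permutations is $A_n!-1$, which is doubly exponential in $n$ once $A_n=e^{n\alpha}$ grows exponentially, so a naive union bound over $\sigma$ explodes. The whole purpose of the cycle-gain formalism ($C^{(r)}_k$, $G(c)$, complete graph $K_k(a_1,\ldots,a_{\binom{k}{2}})$) is to reorganise the sum by cycle-cover structure and show that long cycles and products of many disjoint cycles are all dominated by the basic $2$-cycle contributions. Making this domination rigorous with constants uniform in $A_n$, so that the bound does not blow up as $A_n$ grows exponentially in $n$, is the delicate step and is precisely where the novel graph-theoretic technique advertised alongside the theorem statement must do the work.
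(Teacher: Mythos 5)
Your sufficiency argument follows the same skeleton as the paper's (ML decoder, pairwise Chernoff bound at $t=1/2$ giving Bhattacharyya cycle gains, reorganization of the union bound by cycles of $K_{A_n}$, domination by $2$-cycles), but the one explicit inequality you supply for long cycles, $G(c)^n\le(\max_{e\in c}B_e)^{2n}$, cannot close the argument. A fixed edge $(i,j)$ is the maximal edge of order $A_n^{r-2}$ cycles of length $r$, so summing your bound over $C^{(r)}_{A_n}$ yields at best $A_n^{r-2}\sum_{i<j}e^{-2nB(P_i,P_j)}$, which diverges for every $r\ge 3$ once $A_n$ grows exponentially. The entire content of the sufficiency direction is the paper's Lemma~\ref{lemma:graph}: a grouping/AM--GM argument showing that the \emph{average} cycle gain over $C^{(r)}_k$ is at most $\left(\sum_e a_e^2/n_k\right)^{r/2}$, which combined with $N_{r,k}/n_k^{r/2}\le 4^r$ gives $\sum_{c\in C^{(r)}_{A_n}}G(c)\le 4^r S_n^{r/2}$ with $S_n:=\sum_{i<j}e^{-2nB(P_i,P_j)}$, so the sum over $r$ is a convergent geometric series once $S_n$ is small. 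You correctly identify this as the delicate step, but you do not supply it, and the substitute bound you wrote down provably fails. (A minor structural difference: rather than summing over all permutations and taking products over disjoint cycles, the paper observes that multi-cycle error events are already contained in the union of shorter single-cycle events, so only single cycles need be enumerated for each $r$; both bookkeepings work, but both need Lemma~\ref{lemma:graph}.)

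The necessity direction is where you genuinely diverge from the paper, and there is a gap. The genie argument lower-bounds the \emph{total} error $P_e^{(n)}$ by the binary Bayes error of testing $P_i^n\otimes P_j^n$ against $P_j^n\otimes P_i^n$; it does \emph{not} lower-bound $\prob\left[\Hsig=\tau_{ij}\right]$ by that quantity, since on the event that the swap beats the identity the ML decoder may well output a third permutation of even higher likelihood. Consequently you cannot sum genie bounds over the disjoint events $\{\Hsig=\tau_{ij}\}$; what the genie actually yields is only $P_e^{(n)}\gtrsim\max_{i<j}e^{-2nB(P_i,P_j)}$, and the maximum tending to zero does not prevent the sum over exponentially many pairs from staying bounded away from zero. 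A sanity check that the lifting step is broken: your claimed conclusion $P_e^{(n)}\gtrsim S_n$ is impossible whenever $S_n>1$. The paper instead lower-bounds the probability of the \emph{union} of the overlapping events $\xi_{i,j}$ (swap beats identity) via the Chung--Erd\H{o}s second-moment inequality, controls the cross terms $\prob[\xi_{i,j}\cap\xi_{k,l}]$ by three-edge Bhattacharyya products, and invokes Lemma~\ref{lemma:graph} once more to obtain $P_e^{(n)}\ge\sqrt{S_n}/\left(8+\sqrt{S_n}\right)$, which correctly saturates at a constant. You would need to replace your disjointness step with such a second-moment (Paley--Zygmund type) argument for the argument to go through.
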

 
The rest of this section contains the proof. 
To prove Theorem~\ref{thm:main identification}, we provide  upper and lower bounds on the probability of error  in the following subsections.
\subsection{Upper bound on the probability of identification error}\label{subsec:achievability}
We use the optimal  Maximum Likelihood (ML) decoder, which minimizes the average probability of error, given by
\begin{align}
\hsig(x_1^n,\ldots, x_{A_n}^n):=\arg\max_{\sigma\in S_{A_n}}\sum_{i=1}^{A_n} \log \left(P_{\sigma_i}\left( x_i^n\right) \right),\label{eq:ml achieve}
\end{align}
where $P_{\sigma_i}\left( x_i^n\right)=\prod_{t=1}^nP_{\sigma_i}\left( x_{i,t}\right)$. 
The average probability of error associated with the ML decoder can also be written as
\begin{align}
P_e^{(n)}&=\PP\left[\Hsig \neq [A_n]\big\vert \widehat{H} \right]\notag
\\
&= \PP\left[\bigcup_{\hsig\neq [A_n]}\Hsig=\hsig\big\vert \widehat{H} \right]\notag
\\
&=\PP\left[\bigcup_{r=2}^{A_n} \bigcup_{\substack{\hsig:\\\left\{\sum_{i=1}^{A_n}\mathbbm{1}_{\{\hsig_i\neq i\}}=r \right\}}}  \Hsig =\hsig \big\vert \widehat{H} \right]\label{eq:2<r}
\\
&=\PP\Bigg[\bigcup_{r=2}^{A_n} \bigcup_{\substack{\hsig:\\\left\{\sum_{i=1}^{A_n}\mathbbm{1}_{\{\hsig_i\neq i\}}=r \right\}}} \sum_{i=1}^{A_n}\log\frac{P_{\hsig_i}\left(X_i^n \right)}{P_i\left(X_i^n \right)}\geq 0
  \big\vert \widehat{H}\Bigg],
\label{eq:error double cycle}
\end{align}
where $\widehat{H}:=\left\{X_i^n \simiid P_{i}, \forall i\in [A_n]\right\}$ and  where~\eqref{eq:2<r} is due to the requirement that each sequence is distributed according to a distinct distribution and hence the number of incorrect distributions ranges from $[2:A_n]$. 
In order to avoid considering the same set of error events multiple times, we incorporate a graph theoretic interpretation of $\left\{\sum_{i=1}^{A_n}\mathbbm{1}_{\{\Hsig_i\neq i\}}=r \right\}$ in~\eqref{eq:error double cycle} which is used to denote the fact that we have identified $r$ distributions incorrectly. Consider the two sequences $[i_1,\ldots, i_r]$ and $[\hsig_{i_1},\ldots, \hsig_{i_r}]$ for which we have 
\[\left\{\sum_{i=1}^{A_n}\mathbbm{1}_{\{\hsig_i\neq i\}}=\sum_{j=1}^{r}\mathbbm{1}_{\{\hsig_{i_j}\neq i_j\}}=r \right\}.\]
These two sequences in~\eqref{eq:error double cycle} in fact indicate the event that we have (incorrectly) identified $X_{i_j}^n\simiid P_{\hsig_{i_j}}$ instead of the (true) distribution  $X_{i_j}^n\simiid P_{{i_j}}, \forall j\in [r]$. 
 For a complete graph $K_{A_n}$, the set of edges between $\left((i_1,\hsig_{i_1}),\ldots, (i_r,\hsig_{i_r})\right)$ in $K_{A_n}$ would produce a single cycle of length $r$ or a set of disjoint cycles with total length $r$. However, we should note that in the latter case where the sequence of edges construct a set of (lets say of size $L$) disjoint cycles (each with some length $\tilde{r}_l$ for $\tilde{r}_l<r$ such that $\sum_{l=1}^L \tilde{r}_l=r$), then those cycles and their corresponding  sequences are already taken into account in the (union of) set of $\tilde{r}_l$ error events. 

As an example, assume $A_n=4$ and consider the error event
\[\log\frac{P_2(X_1^n)}{P_1(X_1^n)}+\log\frac{P_1(X_2^n)}{P_2(X_2^n)}+\log\frac{P_4(X_3^n)}{P_3(X_3^n)}+\log\frac{P_3(X_4^n)}{P_4(X_4^n)}\geq 0,\]
which corresponds to the (error) event of choosing $[\hsig_1, \hsig_2,\hsig_3,\hsig_4]=[2,1, 4, 3]$ over $[1,2,3,4]$ with $r=4$ errors. In the graph representation, this gives two cycles of length $2$ each, which correspond to
\begin{align*}
\left\{\log\frac{P_2(X_1^n)}{P_1(X_1^n)}+\log\frac{P_1(X_2^n)}{P_2(X_2^n)}\geq 0 \ \right\} \cap
\left\{\log\frac{P_4(X_3^n)}{P_3(X_3^n)}+\log\frac{P_3(X_4^n)}{P_4(X_4^n)}\geq 0\right\},
\end{align*} 
and are already accounted for in the events 
\[\left\{[\hsig_1, \hsig_2,\hsig_3,\hsig_4]=[2,1, 3, 4]\right\}\cup \left\{[\hsig_1, \hsig_2,\hsig_3,\hsig_4]=[1,2, 4, 3]\right\}\]
 with $r=2$.
 
 As the result, in order to avoid double counting, in evaluating~\eqref{eq:error double cycle} for each $r$ we should only consider the sets of sequences which produce a {\it single} cycle of length $r$.
 
 Before proceeding further, we define the edge weights for a complete weighted graph \[K_{A_n} (a_{(1,2)},\ldots a_{(K_{n},1)}).\] In particular, we define $a_{(i,j)}:=e^{-nB(P_i,P_j)}$ to be the edge weight between vertices $(i,j)$  in the complete graph $K_{A_n}$ shown in Fig.~\ref{fig:graph cycles}. 
\begin{figure}[htbp]
\centering
\includegraphics[width = .5 \textwidth]{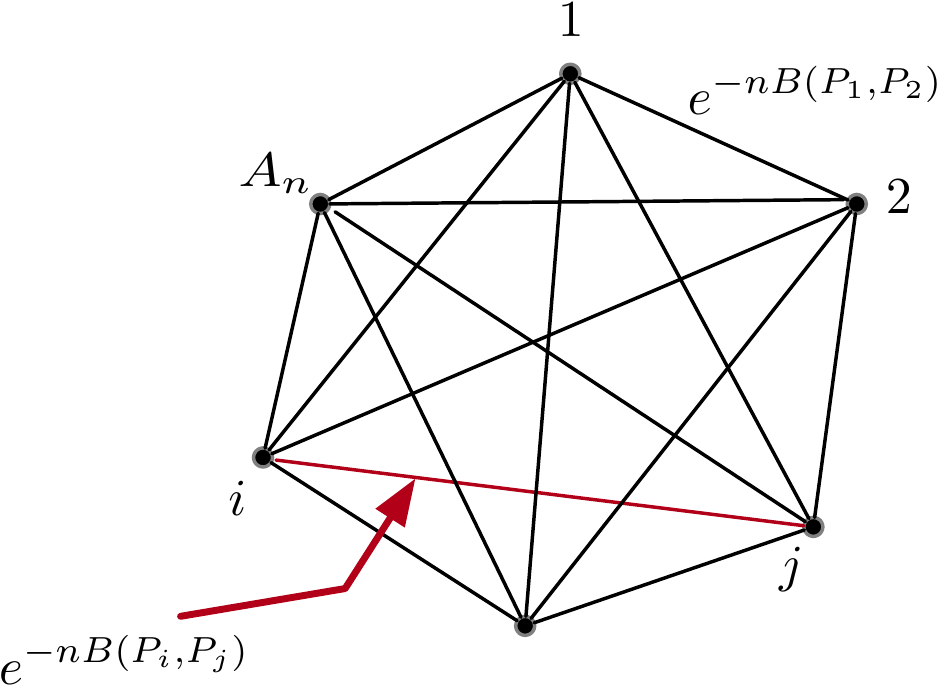}
\caption{Complete graph $K_{A_n}$ with edge weight $e^{-nB(P_i,P_j)}$ for every pair of vertices $i\neq j\in [K_n]$.}
\label{fig:graph cycles}
\end{figure}
 
  Hence, we can upper bound the probability of error in~\eqref{eq:error double cycle} as 
\begin{align}
P_e^{(n)}&\leq\sum_{r=2}^{A_n} 
\sum_{\substack{c\in C^{(r)}_{A_n}}} \PP\left[\sum_{i=1}^r\log \frac{P_{\lfloor c^{(v)}(i+1)\rfloor_{r}}\left(X^n_{c^{(v)}(i)}\right)}{P_{c^{(v)}(i)}\left(X^n_{c^{(v)}(i)}\right)}\geq 0\vert \widehat{H}\right]\notag\\
&\leq \sum_{r=2}^{A_n} 
\sum_{\substack{c\in C^{(r)}_{A_n}}} e^{-n\sum_{i=1}^r B(P_{c^{(v)}(i)}, P_{c^{(v)}(\lfloor i+1\rfloor_{r})})}\label{eq:ml achievability}
\\
&=\sum_{r=2}^{A_n} \sum_{c\in C_{A_n}^{(r)}} G(c)\label{eq:achieve graph},
\end{align}
where $r$ enumerates the number of incorrect matchings 
and where $c(i)$ is the $i$-th vertex in the cycle $c$. In~\eqref{eq:achieve graph}, we have leveraged the fact that $e^{-nB(P_i,P_j)}$ is the edge weight between vertices $(i,j)$  in the complete graph $K_{A_n}$ and hence $ G(c)=e^{-n\sum_{i=1}^r B(P_{c^{(v)}(i)}, P_{c^{(v)}(\lfloor i+1\rfloor_{r})})}$ is the  gain of cycle $c$. 
The inequality in~\eqref{eq:ml achievability} is by
\begin{align}
 &
 \PP\left[\sum_{i=1}^r\log \frac{P_{\lfloor c^{(v)}(i+1)\rfloor_{r}}\left(X^n_{c^{(v)}(i)}\right)}{P_{c^{(v)}(i)}\left(X^n_{c^{(v)}(i)}\right)}\geq 0\vert \widehat{H}\right]\notag
\\&\leq \exp\left\{n\inf_t \log \mathbb{E}\left[ \prod_{i=1}^r \left(\frac{P_{ c^{(v)}(\lfloor i+1\rfloor_{r})}\left(X^n_{c(i)}\right)}{P_{c^{(v)}(i)}\left(X^n_{c(i)}\right)}\right)^t \right]\right\}\notag
\\&\leq  \exp\left\{n\sum_{i=1}^r \log \mathbb{E}\left[  \left(\frac{P_{c^{(v)}(\lfloor i+1\rfloor_{r})}\left(X^n_{c(i)}\right)}{P_{c^{(v)}(i)}\left(X^n_{c(i)}\right)}\right)^{1/2} \right]\right\}\label{eq:t = 1/2}
\\&=  \exp\left\{-n\sum_{i=1}^r B(P_{c^{(v)}(i)}, P_{ c^{(v)}(\lfloor i+1\rfloor_{r})})\right\}.\notag
\end{align}

The fact that we used $t=1/2$ in~\eqref{eq:t = 1/2} instead of finding the exact optimizing $t$, comes from  the fact that $t=1/2$ is the optimal choice for $r=2$ and as we will see later, the rest of the error events are dominated by the set  of only $2$ incorrectly identified distributions. This can be seen as follows for $X_1^n\simiid P_1, X_2^n\simiid P_2$
\begin{align}
&\PP\left[\log\frac{P_1(X^n_2)}{P_2(X^n_2)}+\log\frac{P_2(X_1^n)}{P_1(X_1^n)}\geq 0 \right]\notag
\\
&=\sum_{\substack{\hat{P}_1,\hat{P_2}:\\\sum_{x\in \mathcal{X}}\hat{P}_1(x)\log \frac{P_2(x)}{P_1(x)}+\\\hat{P_2}(y)\log\frac{P_1(x)}{P_2(x)}\geq 0}} \exp\left\{ nD\left(\hat{P}_1\parallel P_1\right)\!-\!nD\left(\hat{P}_2\parallel P_2 \right)\!\right\}\notag
\\&\doteq e^{-nD\left(\tilde{P} \parallel P_1\right)-nD\left(\tilde{P} \parallel P_2 \right)}=e^{-2nB(P_1, P_2)},\label{eq:lagrange}
\end{align}
where $\tilde{P}$ in the first equality in~\eqref{eq:lagrange}, by using the Lagrangian method, can be shown to be equal to $\tilde{P}(x)=\frac{\sqrt{P_1(x)P_2(x)}}{\sum_{x'}\sqrt{P_1(x')P_2(x')}}$ and subsequently the second inequality in~\eqref{eq:lagrange} is proved.

In order to calculate the expression in~\eqref{eq:achieve graph}, we use the following graph theoretic Lemma, the proof of which is given in Appendix~\ref{app:lemma graph proof}.

\begin{lemma}\label{lemma:graph}
In a complete graph $K_{k}\left(a_1,\ldots, a_{n_k}\right)$ and for the set of cycles of length $r$,  $\mathcal{C}_{k}^{(r)}=\{c_1,\ldots c_{N_{r,k}}\}$, we have
\begin{align}
\frac{1}{N_{r,k}}\left( G(c_1)+\ldots G(c_{N_{r,k}})\right)& \leq \left(\frac{a_1^2+\ldots+ a_{n_k}^2}{n_k}\right)^{\frac{r}{2}}\notag
\end{align}
where $N_{r,k}, n_k$ are the number of cycles of length $r$ and the number of edges in the complete graph $K_k$, respectively.
\end{lemma}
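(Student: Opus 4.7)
The plan is to first apply AM-GM cycle by cycle and then aggregate the per-cycle bounds using the edge-transitive symmetry of $K_k$. For each $c \in \mathcal{C}_k^{(r)}$, applying AM-GM to the $r$ nonnegative quantities $\{a_e^2 : e \in c\}$ yields the per-cycle estimate
\begin{align*}
G(c) \;=\; \prod_{e\in c} a_e \;\leq\; \left(\frac{1}{r}\sum_{e\in c} a_e^2\right)^{r/2},
\end{align*}
with equality iff all $r$ edge weights in $c$ coincide. Setting $y_c := \frac{1}{r}\sum_{e\in c} a_e^2$, summing over cycles then gives $\sum_c G(c) \leq \sum_c y_c^{r/2}$.

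Next I would exploit the fact that $\mathrm{Aut}(K_k) = S_k$ acts transitively on edges, so that every edge of $K_k$ lies in the same number of $r$-cycles. A double-count of edge-cycle incidences shows this common number equals $m = N_{r,k}\, r / n_k$, and therefore
\begin{align*}
\frac{1}{N_{r,k}}\sum_c y_c \;=\; \frac{m}{N_{r,k}\, r}\sum_e a_e^2 \;=\; \frac{1}{n_k}\sum_e a_e^2,
\end{align*}
so the arithmetic mean of the $y_c$ over cycles recovers the global quadratic mean $\bar{y}$ of the edge weights that appears on the right-hand side of the lemma.

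The main obstacle is then to upgrade the loose estimate $\sum_c G(c) \leq \sum_c y_c^{r/2}$ into the claimed bound $\sum_c G(c) \leq N_{r,k}\,\bar{y}^{r/2}$. For $r>2$ the function $x \mapsto x^{r/2}$ is convex, so a naive application of Jensen's inequality to $\tfrac{1}{N_{r,k}}\sum_c y_c^{r/2}$ produces a lower bound rather than an upper one; hence the per-cycle AM-GM step by itself is insufficient. To close this gap I would try to exploit the structure of the cycle family more carefully, for instance via Cauchy--Schwarz in the form $\bigl(\sum_c G(c)\bigr)^2 \leq N_{r,k}\sum_c G(c)^2 = N_{r,k}\sum_c \prod_{e\in c} a_e^2$ (reducing the problem at exponent $r$ to an analogous statement for the squared weights), via an induction on $r$ relating cycles of length $r$ to shorter closed substructures in $K_k$ by edge removal, or via a spectral argument on the weighted adjacency matrix $A_{ij}=a_{\{i,j\}}$, using the constraints $\mathrm{tr}(A)=0$ and $\mathrm{tr}(A^2)=2\sum_e a_e^2$ to bound $\mathrm{tr}(A^r)$, which dominates $r\sum_c G(c)$ because $A$ has nonnegative entries. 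Verifying the base case $r=2$ (where the lemma holds with equality) and selecting the right combination of these tools is the main technical difficulty.
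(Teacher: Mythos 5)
Your first two steps are fine, but as you yourself observe, they terminate in an inequality pointing the wrong way: after the per-cycle AM--GM you are left with $\frac{1}{N_{r,k}}\sum_c y_c^{r/2}$, and since $x\mapsto x^{r/2}$ is convex for $r>2$, the cycle-average identity you derived only yields $\frac{1}{N_{r,k}}\sum_c y_c^{r/2}\geq \bar y^{\,r/2}$ by Jensen, i.e.\ a lower bound on the quantity you must bound from above. None of the three escape routes you list is shown to close this gap, and at least one of them cannot: the spectral bound $2r\sum_c G(c)\leq \mathrm{tr}(A^r)\leq (\mathrm{tr}(A^2))^{r/2}=(2\sum_e a_e^2)^{r/2}$ gives $\sum_c G(c)\leq \frac{2^{r/2-1}}{r}\left(\sum_e a_e^2\right)^{r/2}$, and the constant $\frac{2^{r/2-1}}{r}$ strictly exceeds the required $N_{r,k}/n_k^{r/2}$ for every finite $k$ (e.g.\ $r=4$, $k=5$ gives $1/2$ versus the needed $15/100$), so it recovers the lemma only asymptotically in $k$. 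The Cauchy--Schwarz reduction likewise does not telescope: squaring the weights turns $\sum_e a_e^2$ into $\sum_e a_e^4$ on the right-hand side, and the power-mean inequality then puts the resulting bound on the wrong side of the target. As written, the proposal therefore does not prove the lemma.

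The paper's proof inverts the order of operations. Instead of applying AM--GM inside each cycle and then averaging over cycles, it expands the entire right-hand side $\frac{N_{r,k}r}{n_k}\left(a_1^2+\cdots+a_{n_k}^2\right)^{r/2}$ into monomials of coefficient one and partitions these monomials into $N_{r,k}$ groups $\Theta_1,\ldots,\Theta_{N_{r,k}}$, one per cycle: a monomial supported on a subset of the edges of $c$ is assigned to $c$, split equally among all $r$-cycles containing that edge subset. A single application of AM--GM to each group of $r\,n_k^{r/2-1}$ monomials then produces exactly $r\,n_k^{r/2-1}G(c)$, because the symmetry of the grouping together with a double-counting argument forces every edge exponent in the group product to the common value $p=r\,n_k^{r/2-1}$; summing over cycles gives the claim for even $r$, and odd $r$ is handled by raising both sides to a suitable even power $2m$. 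The missing idea in your outline is precisely this global partition-then-AM--GM step, which sidesteps the convexity obstruction entirely.
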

By Lemma~\ref{lemma:graph} and~\eqref{eq:achieve graph} we prove in Appendix~\ref{app:ident ach} that 
 the upper bound on the  probability of error $P_e^{(n)}$  goes to zero if 
\begin{align}
\lim_{n\to \infty}\sum_{\substack{1\leq i<j\leq A_n}}e^{-2nB(P_i,P_j)}=0.\label{eq:ident ach}
\end{align}
As a result of Lemma~\ref{lemma:graph}, it can be seen from~\eqref{eq:nN} that the sum of probabilities that $r\geq 3$ distributions are incorrectly identified is dominated by the probability that only $r=2$ distributions are incorrectly identified. This shows that the most probable error event is indeed an error event with two wrong distributions.

\subsection{Lower bound on the probability of  identifiability error}
For our converse, we use the optimal ML decoder, and as a lower bound to the probability of error in~\eqref{eq:error double cycle}, we only consider the set of error events with only two incorrect distributions, i.e., the set of events with $r=2$. In this case we have
\begin{align}
P_e^{(n)}&\geq \PP\left[ \bigcup_{\substack{1\leq i<j\leq A_n}} \log\frac{P_i(X^n_j)}{P_j(X^n_j)}+\log\frac{P_j(X_i^n)}{P_i(X_i^n)}\geq 0\vert \widehat{H}\right]\notag
\\&\geq \frac{\left(\sum_{\substack{1\leq i<j\leq A_n}}  \PP\left[\xi_{i,j}\right]\right)^2}{\sum_{\substack{(i,j),(j,k)\\(i,j)\neq(l,k)\\i\neq j, l\neq k}}\PP[\xi_{i,j},\xi_{k,l}]},
\label{eq:conv lb}
\end{align}
where~\eqref{eq:conv lb} is by~\cite{chung1952application}
 and where 
\begin{align}
&\xi_{i,j}:= \left\{\log\frac{P_i}{P_j}(X^n_j)+\log\frac{P_j}{P_i}(X_i^n)\geq 0\vert \widehat{H}\right\}.
\label{eq:1/2 optimal}
\end{align}
We prove in Appendix~\ref{app:ident conv} that a lower bound on $P_e^{(n)}$ is given by 
\begin{align}
P_e^{(n)}\geq &\frac{\left(\sum_{1\leq i<j\leq A_n}e^{-2nB(P_i,P_j)}\right)^2}{\sum\limits_{i,j,k}e^{-nB(P_i,P_j)-nB(P_i,P_k)-nB(P_k,P_j)}\!+\!\left(\sum\limits_{i,j}e^{-2nB(P_i,P_j)}\!\right)^2}\label{eq:ident conv}
\\& \geq \frac{\left(\sum_{i,j}e^{-2nB(P_i,P_j)}\right)^2}{8\left(\sum\limits_{1\leq i<j\leq A_n}\!\!\!e^{-2nB(P_i,P_j)}\right)^{\!\frac{3}{2}}\!\!\!+\left(\sum\limits_{1\leq i<j\leq A_n}e^{-2nB(P_i,P_j)}\right)^2}
\label{eq:conv simplified}\\
&=\frac{\sqrt{\sum_{1\leq i<j\leq A_n}e^{-2nB(P_i,P_j)}}}{8+\sqrt{\sum_{1\leq i<j\leq A_n}e^{-2nB(P_i,P_j)}}}, \label{eq:conv final lb}
\end{align}
where~\eqref{eq:conv simplified} is by Lemma~\ref{lemma:graph}. 
As it can be seen from~\eqref{eq:conv final lb}, if  $\lim_{n\to \infty }\sum_{\substack{1\leq i<j\leq A_n}}e^{-2nB(P_i,P_j)}\neq 0$, the probability of error is bounded away from zero. As a result, we have to have 
\[\lim_{n\to \infty }\sum_{\substack{1\leq i<j\leq A_n}}e^{-2nB(P_i,P_j)}=0,\] which also matches our upper bound on the probability of error  in~\eqref{eq:achieve final ub}.

\begin{remark} 
  As it is clear from the result of Theorem~\ref{thm:main identification},  when $A_n$ is a constant or  grows polynomially with $n$, the sequence of distributions in $P$ are always identifiable and the probability of error in the identification problem decays to zero as the blocklength $n$ goes to infinity. The interesting aspect of Theorem~\ref{thm:main identification} is in the regime that $A_n$ increases exponentially with the blocklength.
  \end{remark} 
  
  Having proved the criterion for identifiability of a massive number of distributions in Theorem~\ref{thm:main identification}, we move on to the SAS-MAC problem. We use the result of Theorem~\ref{thm:main identification}   to identify the massive number of users  by their induced probability distribution at the receiver.
\section{SAS-MAC problem}
\label{sec:formulation}
We first introduce the special notation used in the SAS-MAC  and then formally define the problem.
\subsection{Special Notation}
A stochastic kernel / transition probability / channel from $\mathcal{X}$ to $\mathcal{Y}$ is denoted by $Q(y|x), \forall (x,y)\in \mathcal{X}\times\mathcal{Y}$, and the output marginal distribution induced by $P\in \mathcal{P}_{\mathcal{X}}$ through the channel $Q$ as 
\begin{align}
[PQ](y) := \sum_{x\in\mathcal{X}} P(x)Q(y|x), \forall  y\in\mathcal{Y},
\label{eq:defYmarginal}
\end{align} 
where $\mathcal{P}_{\mathcal{X}}$ is the space of all distributions on $\mathcal{X}$. 
We define the shorthand notation
\begin{align}
Q_{x}(y) &:=Q(y|x), \forall  y\in\mathcal{Y}.
\label{eq:defQx(y)}
\end{align}

For a MAC channel $Q(y|x_1,\ldots,x_K), \forall (x_1,\ldots,x_K,y)\in \mathcal{X}_1\times\ldots\times\mathcal{X}_K\times\mathcal{Y}$, we define the shorthand notation 
\begin{align}
Q_S\left( y\vert x_S\right) &:=Q(y\vert x_S, \star_{S^c}), \forall S\subseteq [K],
\label{eq:defQS(y|xS)}
\end{align}
to indicate that users indexed by $S$ transmit $x_i, \forall i \in S$, and users indexed by $S^c$ transmit their respective idle symbol $\star_j, \forall j \in S^c=[K]\backslash S$.
 When $\vert S\vert =1$, we use 
 \[Q_{i}(y\vert x_i):= Q_{\{i\}}(y\vert x_i)=Q(y\vert \star_1,\ldots,\star_{i-1}, x_i, \star_{i+1},\ldots, \star_K),\]
  and when $\vert S\vert =0$, we use \[Q_\star(y):=Q(y\vert \star_1,\ldots \star_K).\]

The empirical distribution of a sequence $x^n$ is 
\begin{align}
\widehat{P}_{x^n}(a)
:=\frac{1}{n}\mathcal{N}(a\vert x^n)=\frac{1}{n}\sum_{i=1}^n \mathbbm{1}_{\{x_i=a\}}, 
\forall a \in \mathcal{X},
\label{eq:def empirical dist}
\end{align}
where $\mathcal{N}(a\vert x^n)$ denotes the number of occurrences of letter $a\in \mathcal{X}$ in the sequence $x^n$; 
when using~\eqref{eq:def empirical dist} the target sequence $x^n$ is usually clear from the context so we may drop  the subscript $x^n$ in $\widehat{P}_{x^n}(\cdot)$.
The $P$-type set and the $V$-shell of the sequence $x^n$ are defined, respectively, as 
\begin{align}
T(P)&:=\left\{x^n: \mathcal{N}(a\vert x^n)=nP(a),\forall a\in \mathcal{X} \right\},
\label{eq:def Ptype}
\\
T_V(x^n)&:=\left\{y^n\!:\! \frac{\mathcal{N}\left(a,b\vert x^n, y^n \right)}{\mathcal{N}(a\vert x^n)}=V(b|a),\forall (a, b)\in (\mathcal{X},\mathcal{Y})\right\},
\label{eq:def Vshell}
\end{align}
where $\mathcal{N}(a,b\vert x^n, y^n) =\sum_{i=1}^n \mathbbm{1}_{\left\{\substack{x_i=a\\y_i=b} \right\}}$ is the number of joint occurrences of $(a,b)$ in the pair of sequences $(x^n, y^n)$.

We use 
$D(P_1 \parallel P_2)$ to denote the Kullback Leibler divergence between distribution $P_1$ and $P_2$,
and $D(Q_1 \parallel Q_2|P) :=\sum_{x,y\in\mathcal{X}\times \mathcal{Y}} P(x)Q_1(y|x) \log \frac{Q_1(y|x)}{Q_2(y|x)}$ for the conditional Kullback Leibler divergence.   
We let $I(P,Q) = D(Q \parallel [PQ]|P)$ denote the mutual information between random variable {$(X,Y)$ with joint distribution $P_{X,Y}(x,y)=P(x)Q(y|x)$.

\subsection{ SAS-MAC Problem Formulation}
Let
 $M$ be the number of messages,
 $A$ be the number of blocks, and
 $K$ be the number of users.
An $(M, A,K,n, \epsilon)$ code for the SAS-MAC 
consists of:
\begin{itemize}

\item 
A message set $[M]$, for each user $i \in [K]$, from which messages are chosen uniformly at random and are independent across users.

\item
An encoding function $f_i: [M]\to \mathcal{X}^n$, for each user $i\in [K]$. We define 
\begin{align}
x_i^n(m):= f_i(m).
\end{align} 
Each user $i\in[K]$ choses a message $m_i\in [M]$ and a block index $t_i \in [A]$, both uniformly at random. It then transmits $[\star_i^{n(t_i-1)} \ x_i^n(m_i) \ \star_i^{n(A-t_i)}]$, where $\star_i\in \mathcal{X}$ is the designated `idle' symbol for user $i$. 

\item
A destination decoding function 
\begin{align}
\left((\widehat{t}_1, \widehat{m}_1),\ldots, (\widehat{t}_K, \widehat{m}_K)\right)= g(\mathcal{Y}^{nA}),
\end{align}
such that its associated probability of error, $P_e^{(n)}$, satisfies $P_e^{(n)} \leq \epsilon$ where 
\begin{align}
P_e^{(n)}:= \frac{1}{\left( A M\right)^K}\sum_{(t_1, m_1),\ldots, (t_K,m_K)} \PP\left[g(Y^n)\neq \left((t_1, m_1),\ldots, (t_K,m_K) \right) \vert H_{\left((t_1, m_1),\ldots, (t_K,m_K) \right)}\right],
\end{align}
where the hypothesis that user $i\in [K]$ has chosen message $m_i \in [M]$ and block $t_i \in [A]$ is denoted by $ H_{\left((t_1, m_1),\ldots, (t_K,m_K) \right)}$.
 
\end{itemize}

A tuple $(R, \alpha, \nu)$ is said to be achievable if there exists a sequence of codes $\left(e^{nR},e^{n\alpha}, e^{n\nu}, n,\epsilon_n\right)$ with $\lim_{n\to \infty }\epsilon_n = 0$. The capacity region of the  SAS-MAC 
at asynchronous exponent $\alpha$, occupancy exponent $\nu$ and rate $R$, is the closure of all possible achievable $(R, \alpha, \nu)$ triplets.

\section{Main results for SAS-MAC}~\label{sec:results}
In this Section we first introduce an achievable region for the case that different users have identical channels (in Theorem~\ref{thm:same channel achievability}).  We then move on to the more general case where the users' channels belong to a set of conditional probability distributions of polynomial size in $n$ (in Theorem~\ref{thm:achievable several channels}). In this case, we use the output statistics to distinguish and identify the users.  We then remove all  restrictions on the users' channels and derive an achievability bound on the capacity of the SAS-MAC (in Theorem~\ref{thm:ach general}). After that, we propose a converse bound on the capacity of general SAS-MAC (in Theorem~\ref{thm:converse sas-mac}).  We then provide a converse bound on the  number of users (in Theorem~\ref{thm:conv nu}).

\subsection{Users with Identical Channels}
\label{subsec:Users with Identical Channels}
The following theorem is an achievable region for the 
SAS-MAC for the case that different users have identical channels toward the base station when they are the sole active user. In this scenario, users' identification and decoding can be merged together.

\begin{theorem}
\label{thm:same channel achievability}
For a SAS-MAC with asynchronous exponent $\alpha$, occupancy exponent $\nu$ and rate $R$, 
assume that $Q_{\{i\}}(y\vert x) 
 = Q(y\vert x)$ (recall definition~\eqref{eq:defQS(y|xS)}) for all users.
Then, the following $(R, \alpha, \nu)$ region is achievable
\begin{align}
\bigcup_{\substack{P\in \mathcal{P}_{\mathcal{X}}\\\lambda \in [0,1]}}
\left\{\begin{matrix}
\nu&< \frac{\alpha}{2}\\
\nu&< D(Q_{\lambda} \parallel Q|P)\\
\alpha+R+\nu&<D(Q_{\lambda} \parallel Q_\star|P)\\
R+\nu&<I(P,Q)
\end{matrix}\right\},
\label{eq:ach same channel}
\end{align}
where 
\begin{align}
Q_\lambda(y\vert x) := \frac{Q(y\vert x)^\lambda Q_\star(y)^{1-\lambda}}{\sum_{y^\prime\in \mathcal{Y}}Q(y^\prime\vert x)^\lambda Q_\star(y^\prime)^{1-\lambda}}, \ \forall (x,y)\in \mathcal{X}\times\mathcal{Y}.
\end{align}
\end{theorem}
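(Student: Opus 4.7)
The approach is a random coding argument with independent per-user codebooks combined with a block-by-block Chernoff-tilted threshold detector. For each $i\in[K_n]$ and $m\in[M_n]$, draw $x^n(i,m)$ i.i.d.\ from $P^n$; each user $i$ independently picks $(t_i,m_i)$ uniformly and transmits $x^n(i,m_i)$ in block $t_i$ with $\star_i^n$ in every other block. The first constraint $\nu<\alpha/2$ is handled by a birthday-style union bound: the probability that two of the $K_n=e^{n\nu}$ users pick the same block among $A_n=e^{n\alpha}$ is at most $\binom{K_n}{2}/A_n = O(e^{n(2\nu-\alpha)})\to 0$. Condition on this no-collision event, so each of the $K_n$ active blocks contains a single codeword and the remaining $A_n-K_n$ idle blocks are drawn from $Q_\star^n$.

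The decoder I propose operates block-by-block: for each $t\in[A_n]$, compute $\log\frac{Q^n(y_t^n\mid x^n(i,m))}{Q_\star^n(y_t^n)}$ for every $(i,m)$ and declare block $t$ to carry $(i,m)$ if this quantity exceeds a threshold $n\gamma$ and is uniquely maximized; otherwise declare block $t$ idle. Choosing $\gamma$ so that the Chernoff tilt between $Q_\star$ and $Q$ is precisely $Q_\lambda$ produces the two matched exponents: the miss event (an active block with the true codeword failing the threshold) gets exponent $D(Q_\lambda\|Q|P)$, and the false-alarm event (an idle block in which some codeword's log-likelihood ratio exceeds the threshold) gets exponent $D(Q_\lambda\|Q_\star|P)$. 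This is the standard Neyman--Pearson/Gallager tilting for a binary hypothesis test between $Q(\cdot|x)$ and $Q_\star(\cdot)$ averaged over the i.i.d.\ input $P$.

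The three remaining bounds in~\eqref{eq:ach same channel} correspond exactly to the three error unions. A union over the $K_n$ active blocks yields $K_n\, e^{-nD(Q_\lambda\|Q|P)}\to 0$ iff $\nu<D(Q_\lambda\|Q|P)$; a union over the $A_n$ idle blocks and the $K_n M_n$ candidate codewords yields $A_n K_n M_n\, e^{-nD(Q_\lambda\|Q_\star|P)}\to 0$ iff $\alpha+\nu+R<D(Q_\lambda\|Q_\star|P)$; and a standard Shannon-style union over the $K_n M_n - 1$ incorrect codewords at an active block yields $K_n M_n\, e^{-nI(P,Q)}\to 0$ iff $R+\nu<I(P,Q)$. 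Averaging over the codebook and expurgating produces a deterministic code with vanishing global error, and taking the union over the free parameters $(P,\lambda)\in\mathcal{P}_\mathcal{X}\times[0,1]$ delivers the claimed region.

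The main obstacle will be keeping the three Chernoff calculations synchronized around a single pair $(P,\lambda)$ so that the exponents align with those in~\eqref{eq:ach same channel}. In particular, the false-alarm union is by far the largest ($\sim e^{n(\alpha+\nu+R)}$ events), so the tilted bound must be tight simultaneously with the miss bound on the same codebook statistics; a type-class construction (restricting codewords to $T(P)$, at the cost of polynomial factors absorbed by the exponents) is the natural device for forcing all three bounds to share the same $(P,\lambda)$. A secondary subtlety is ensuring that the ``confusion within an active block'' and the ``false alarm in an idle block'' events are not double-counted, which is handled by first conditioning on correct detection of active vs.\ idle blocks and then invoking the message-decoding bound only on the $K_n$ correctly identified active blocks.
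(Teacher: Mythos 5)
Your codebook construction and the first three bounds track the paper's proof closely: the birthday bound for $\nu<\alpha/2$, the miss bound $K_n e^{-nD(Q_\lambda\|Q|P)}$ giving $\nu<D(Q_\lambda\|Q|P)$, and the false-alarm union $A_nK_nM_n\, e^{-nD(Q_\lambda\|Q_\star|P)}$ giving $\alpha+\nu+R<D(Q_\lambda\|Q_\star|P)$ are exactly the paper's synchronization stage (which, as you anticipate, uses constant-composition codewords from the outset). The genuine gap is in the fourth bound. Your decoder is block-by-block: each active block is tested independently against all $K_nM_n$ candidate codewords. The per-block confusion probability is $K_nM_n\,e^{-nI(P,Q)}$, but the global error criterion forces a union over the $K_n$ active blocks, giving $K_n\cdot K_nM_n\,e^{-nI(P,Q)}=e^{n(2\nu+R)}e^{-nI(P,Q)}$, i.e., the condition $2\nu+R<I(P,Q)$ rather than the claimed $R+\nu<I(P,Q)$. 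Your sentence asserting that the union over the $K_nM_n-1$ incorrect codewords ``at an active block'' vanishes iff $R+\nu<I(P,Q)$ silently drops the factor $K_n$ for the number of active blocks. This is not a cosmetic slip: the weaker condition $2\nu+R$ is precisely what the paper's Theorem~\ref{thm:ach general} obtains with a sequential per-block decoder, and the paper explicitly remarks that this is why that region is smaller than the one in Theorem~\ref{thm:same channel achievability}.

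The paper closes this gap with a different second stage. After synchronization, it concatenates the $K_n$ active blocks into a single superblock of length $nK_n$ and runs a joint ML decoder over all $K_n!\,(M_n)^{K_n}\doteq e^{nK_n(R+\nu)}$ hypotheses (a permutation assigning users to blocks together with a message per user). Because the concatenated codewords still have constant composition $P$ and the effective blocklength is $nK_n$, the normalized rate of this super-code is $R+\nu$, and the constant-composition ML analysis of Moulin yields vanishing error exactly when $R+\nu<I(P,Q)$. The joint decoding also enforces that each user appears in exactly one block, a constraint your independent per-block decoder cannot exploit. As written, your argument establishes a strictly smaller region; to prove the theorem as stated you need to replace the block-by-block message decoding with this superblock joint decoding step.
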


\begin{proof}
Before starting the proof, we note that for $\nu<\frac{\alpha}{2}$ (first bound in~\eqref{eq:ach same channel}), with probability approaching   one as the blocklength $n$ grows to infinity, the users transmit in distinct blocks. Hence, in analyzing the joint probability of error of our achievability scheme, we can safely condition on the hypothesis that users do not collide. The probability of error given the hypothesis that collision has occurred, which may be large, is then multiplied by the probability of collision and hence is vanishing as the blocklength goes to infinity, regardless of the achievable scheme.  The probability of error for this two-stage decoder can be decomposed as
\begin{align}
\PP\left[ \text{Error}\right] 
  &=\PP\left[\text{Synchronization error} \right]
\\&+\PP\left[\text{Decoding error} \vert \text{No synchronization error}\right].
\end{align}

\paragraph*{Codebook generation} 
Let
$K_n = e^{n\nu}$ be the number of users,
$A_n = e^{n\alpha}$ be the number of blocks, and
$M_n = e^{nR}$ be the number of messages.
Each user $i\in [K_n]$ generates a constant composition codebook with composition $P$ by drawing each message's codeword uniformly and independently at random from the $P$-type set $T(P)$ (recall definition in~\eqref{eq:def Ptype}). The codeword of user $i\in [K_n]$ for message $m\in [M_n]$ is denoted as $x_i^n(m)$.

\paragraph*{Probability of error analysis} 
A two-stage decoder is used, to first synchronize  and then decode (which also identifies the users' identities) the users' messages. We now introduce the two stages and bound the probability of error for each stage.

{\bf Synchronization step.} 
We perform a sequential likelihood test as follows.
Fix a threshold
\begin{align}
T\in [-D(Q_\star\parallel Q\vert P),D(Q\parallel Q_\star\vert P)].
\label{eq:def T ThmSameCh}
\end{align} 
For each block $j\in [A_n]$ if there exists any message $m\in [M_n]$ for any user $i\in [K_n]$ such that 
\begin{align}
L(x_i^n(m), y_j^n):=\frac{1}{n}\log \frac{Q(y_j^n\vert x_i^n(m))}{Q_\star(y_j^n)}\geq T,
\label{eq:def LLR ThmSameCh}
\end{align}
then declare that block $j$ is an `active' block, and an `idle' block otherwise. 
Let 
\begin{align}
H^{(1)} := 
 H_{\left((1,1),(2,1),\ldots, (K_n,1) \right)},
\label{eq:def Hyp ThmSameCh}
\end{align}
be the hypothesis that user $i\in[K_n]$ is active in block $i$ and sends message $m_i=1$.
The average probability of synchronization error, averaged over the different hypotheses, is upper bounded by
\begin{align}
& \PP\left[\text{Synchronization error}\right]
 =\PP\left[\text{Synchronization error} \vert H^{(1)} \right]
\label{eq:symmetry hyp ThmSameCh}\\
&\leq
  \sum_{j=1}^{K_n} \PP\left[\bigcap_{i=1}^{K_n} \bigcap_{m=1}^{M_n} L(x_i^n(m),Y_j^n)< T   \vert  H^{(1)}\right]
+ \sum_{j=K_n+1}^{A_n} \PP\left[\bigcup_{i=1}^{K_n} \bigcup_{m=1}^{M_n} L(x_i^n(m), Y_j^n)\geq T  \vert  H^{(1)}\right]
\\
&\leq
  \sum_{j=1}^{K_n}\PP\left[L(x_j^n(1),Y_j^n)< T   \vert  H^{(1)}\right]
 +\sum_{j=K_n+1}^{A_n}\sum_{i=1}^{K_n} \sum_{m=1}^{M_n}  \PP\left[L(x_i^n(m), Y_j^n)\geq T  \vert  H^{(1)}\right]
\\ 
&\leq e^{n\nu}e^{-nD\left( Q_\lambda \parallel Q\vert P\right)}+e^{n(\alpha+\nu+R)}e^{-nD\left(Q_\lambda\parallel Q_\star\vert P \right)}
\label{eq:usual lagrangian},
\end{align}
where~\eqref{eq:symmetry hyp ThmSameCh} is by the symmetry of different hypothesis and~\eqref{eq:usual lagrangian} can be derived as in~\cite[Chapter 11]{coverbook}.
The upper bound on the probability of error for the synchronization error in~\eqref{eq:usual lagrangian} vanishes as $n$ goes to infinity if the second and third bound in~\eqref{eq:ach same channel} hold. 

{\bf Decoding stage.} 
In this stage, by conditioning on no synchronization error, we have a superblock of length $nK_n$, for which we have to distinguish between $K_n! (M_n)^{K_n}\doteq e^{nK_n(R+\nu)}$ different messages. We note that all the codewords for this superblock also have constant composition $P$ (since they are formed by 
 the concatenation of constant composition codewords). We can hence use a Maximum Likelihood (ML) decoder for random constant composition codes, introduced and analyzed in~\cite{moulin2012log}, on the super-block of length $nK_n$ to distinguish among $e^{nK_n(R+\nu)}$ different messages with vanishing probability of error if $R+\nu<I(P,Q)$. This retrieves the last bound in~\eqref{eq:ach same channel}.
\end{proof}
 
\subsection{Users with Different Choice of Channels}
\label{subsec:Users with Different Channels}
We now move on to a more general case in which we remove the restriction that the channels of all users are the same. Theorem~\ref{thm:achievable several channels} finds an achievable region when we allow the users' channels to be chosen from a set of conditional distributions of polynomial size in the blocklength.

\begin{theorem}
\label{thm:achievable several channels}
For a SAS-MAC with asynchronous exponent $\alpha$, occupancy exponent $\nu$ and rate $R$,
assume that $Q_i(y\vert x)=W_{c(i)}(y\vert x)$ is the channel for user $i\in [K_n]$, for some $c(i)\in[S_n]$ where $S_n=\text{poly}(n)$.
Then, the following region is achievable
\begin{align}
\bigcup_{n\geq 1}\bigcup_{\substack{P_j\in \mathcal{P}_{\mathcal{X}}\\\lambda_j \in [0,1]}}\bigcap_{j\in [S_n]}\left\{
\begin{matrix}
\nu_j &< \frac{\alpha}{2} \\
\nu_j &<D([P_jW_j]_{\lambda_j} \parallel [P_jW_j]) \\
\alpha&<D([P_jW_j]_{\lambda_j} \parallel Q_{\star})\\
R+\nu_j&<I(P_j,W_j)\\
\end{matrix}\right\},
\label{eq:ach different channel}
\end{align}
where
\begin{align}
\nu_j &:= \frac{1}{n}\log(\mathcal{N}_j), 
\\
\mathcal{N}_j &:=\sum_{i=1}^{K_n} \mathbbm{1}_{\{Q_i =W_j\}} : \sum_{j=1}^{S_n} \mathcal{N}_j =  K_n. 
\end{align}
\end{theorem}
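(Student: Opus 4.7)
The plan is to extend the two-stage decoder of Theorem~\ref{thm:same channel achievability} by partitioning the $K_n$ users into $S_n$ groups according to their channel type --- group $j$ consists of the $\mathcal{N}_j=e^{n\nu_j}$ users with channel $W_j$ --- and running the synchronization and decoding steps in parallel across groups. Since $S_n=\mathrm{poly}(n)$, the extra union bound over groups costs only a polynomial factor, which is dominated by every exponential error term. For each user $i$ with $c(i)=j$ I draw its $M_n$ codewords independently and uniformly from the constant-composition type class $T(P_j)$. The constraint $\nu_j<\alpha/2$ for all $j$, combined with a birthday argument, ensures that with probability approaching one no two users ever select the same block --- within a group this uses $2\nu_j<\alpha$, and across two groups $j\ne k$ it uses $\nu_j+\nu_k<\alpha$ --- so henceforth I may condition on each of the $K_n$ active blocks containing exactly one codeword.

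For the synchronization step I replace the codeword-dependent likelihood ratio of Theorem~\ref{thm:same channel achievability} by a \emph{type-based} binary test: for each block $t\in[A_n]$ and each group $j\in[S_n]$, flag block $t$ as ``$j$-active'' when $\widehat{P}_{Y_t^n}$ lies in a Neyman--Pearson region tuned by $\lambda_j$ for the hypotheses $[P_jW_j]$ versus $Q_\star$. A standard Sanov/Chernoff analysis yields per-block false-alarm exponent $D([P_jW_j]_{\lambda_j}\parallel Q_\star)$ and per-block miss exponent $D([P_jW_j]_{\lambda_j}\parallel [P_jW_j])$; union-bounding false alarms over the $\doteq e^{n\alpha}$ idle blocks and misses over the $e^{n\nu_j}$ truly $j$-active blocks produces the third and second constraints in~\eqref{eq:ach different channel}. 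Because this test looks only at $\widehat{P}_{Y_t^n}$ and not at the codebook, its false-alarm exponent does not pick up the extra $R+\nu_j$ penalty seen in Theorem~\ref{thm:same channel achievability}. To handle cross-group confusion --- a block that is truly $k$-active being flagged as $j$-active for some $j\ne k$ --- I invoke Theorem~\ref{thm:main identification} on the $S_n$ output distributions $\{[P_jW_j]\}_{j\in[S_n]}$: since $S_n=\mathrm{poly}(n)$ and these distributions are distinct, the identifiability condition $\lim_n\sum_{j<k}e^{-2nB([P_jW_j],[P_kW_k])}=0$ is automatically met, so after synchronization each active block is correctly assigned to its unique group with vanishing error probability.

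Conditional on correct synchronization and group assignment, within group $j$ the $\mathcal{N}_j$ active blocks concatenate into a super-block of length $n\mathcal{N}_j$ carrying one of $\mathcal{N}_j!\,M_n^{\mathcal{N}_j}\doteq e^{n\mathcal{N}_j(R+\nu_j)}$ hypotheses; the constant-composition ML analysis of~\cite{moulin2012log}, applied exactly as in the decoding stage of Theorem~\ref{thm:same channel achievability}, drives this error to zero iff $R+\nu_j<I(P_j,W_j)$, giving the fourth constraint. Taking the union of error events over the $S_n$ groups preserves vanishing error because $S_n$ is polynomial. The main obstacle is the group-identification step: Theorem~\ref{thm:main identification} is precisely the ingredient that makes this work for an $S_n$ that grows (polynomially) with $n$, a regime classical identification machinery with constant number of distributions cannot handle, which is why Theorem~\ref{thm:main identification} was set up first. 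A secondary but conceptually important subtlety is choosing the synchronization test to be type-based rather than codeword-based, so that the false-alarm exponent comes out as $\alpha$ rather than $\alpha+R+\nu_j$; otherwise the achievable region would collapse onto a straightforward reduction to Theorem~\ref{thm:same channel achievability}.
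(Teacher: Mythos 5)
Your architecture matches the paper's: a three-stage decoder (synchronize, identify the channel group, then decode within each group as a super-block), with a synchronization test that depends only on the output block and not on the codewords so that the false-alarm exponent is $\alpha$ rather than $\alpha+R+\nu_j$, and with Theorem~\ref{thm:main identification} supplying the group-identification step. The exponents you extract are the right ones and the union over the polynomially many groups is handled correctly.

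There is, however, one genuine gap: you draw the codewords from the constant-composition type class $T(P_j)$ and then invoke Theorem~\ref{thm:main identification} to sort the active blocks into groups. Theorem~\ref{thm:main identification} is a statement about sequences that are i.i.d.\ according to the candidate distributions, and its entire ML/Bhattacharyya analysis (the bound $\E[(P_j(X^n)/P_i(X^n))^{1/2}]=e^{-nB(P_i,P_j)}$ and the graph-theoretic cycle accounting) is carried out for product distributions $P_i^n$. The identification decoder does not know the transmitted codeword, so it must work with the output law of a $j$-active block averaged over the codebook; with an i.i.d.\ codebook this marginal is exactly $([P_jW_j])^n$ and Theorem~\ref{thm:main identification} applies verbatim, but with a constant-composition codebook the averaged output law is exchangeable, not a product measure, and is not $([P_jW_j])^n$. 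The paper deliberately switches to i.i.d.\ codebooks for this theorem for precisely this reason (see Remark~\ref{rmk:iid vs constant comp}). The fix is cheap --- use i.i.d.\ codebooks drawn from $P_j$, and replace your appeal to the constant-composition ML analysis of~\cite{moulin2012log} in the decoding stage by a typicality decoder for i.i.d.\ codes, which still yields $R+\nu_j<I(P_j,W_j)$ --- but as written the key lemma is applied outside its hypotheses, so the identification step does not follow.
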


\begin{proof} 
Before starting the proof, we should note that with similar arguments as the ones in Theorem~\ref{thm:same channel achievability}, by imposing the first bound in~\eqref{eq:ach different channel}, different users transmit in distinct blocks with a probability which goes to one as blocklength goes to infinity; thus we can assume no user collision in the following.
We now propose a three-stage achievability scheme. 
The three stages perform the task of synchronization, identification and decoding, respectively. 
The joint probability of error for this three-stage achievable scheme can be decomposed as
\begin{align}
\PP\left[ \text{Error}\right]
  &=\PP\left[\text{Synchronization error} \right]
\\&+\PP\left[\text{Identification error} \vert \text{No synchronization error}\right]
\\&+\PP\left[\text{Decoding error}\vert\text{No synchronization and No identification error} \right].
\end{align}

\paragraph*{Codebook generation}
Let
$K_n = e^{n\nu}$ be the number of users,
$A_n = e^{n\alpha}$ be the number of blocks,
$M_n = e^{nR}$ be the number of messages, and
$S_n=\text{poly}(n)$ be the number of channels.
Each user $i\in [K_n]$ generates a random i.i.d codebook according to distribution $P_{c(i)}$ where the index $c(i)\in[S_n]$ is chosen based on the channel $Q_i = W_{c(i)}$. For each user $i\in [K_n]$, the codeword for each message $m\in [M_n]$ is denoted as $x_i^n(m)$.

\paragraph*{Probability of error analysis} 
A three-stage decoder is used. We now introduce the three stages and bound the probability of error for each stage.

{\bf Synchronization step.} 
We perform a sequential likelihood ratio test for synchronization  as follows. 
 Recall $Q_i(\cdot \vert \cdot)=W_{c(i)}(\cdot \vert \cdot)$ for all user $i\in [K_n]$. 
Fix thresholds 
\begin{align}
T_{c(i)}\in \left[-D\left(Q_{\star} \parallel [P_{c(i)}W_{c(i)}] \right), D\left([P_{c(i)}W_{c(i)}] \parallel Q_{\star} \right) \right], \ i\in [K_n].
\label{eq:def T ThmDifferentCh}
\end{align} 
For each block $j\in [A_n]$ if there exists any user $i\in [K_n]$ such that 
\begin{align}
L_i(y_j^n) := \frac{1}{n}\log \frac{[P_{c(i)}W_{c(i)}](y_j^n)}{Q_{\star}(y_j^n)}\geq  T_{c(i)},
\label{eq:def LLR ThmDifferentCh}
\end{align}
then declare that block $j$ is an `active' block. Else, declare that block $j$ is an `idle' block. Note that  were able to calculate the probabilities of error corresponding to~\eqref{eq:def LLR ThmSameCh} by leveraging the constant composition construction of codewords in Theorem~\ref{thm:same channel achievability}. In here, we can leverage the i.i.d. constructure of the codewords and calculate the probability of error corresponding to~\eqref{eq:def LLR ThmDifferentCh}.

We now find an upper bound on the average probability of error for this scheme over different hypotheses. Before doing so, we should note that by the symmetry of different hypotheses, the average probability of error over different hypothesis is equal to probability of error given the hypothesis that user $i\in [K_n]$ transmits in block $i$; this hypothesis is denoted by 
\begin{align}
H^{(2)} := 
H_{\left((1,.),(2,.),\ldots, (K_n,.) \right)} ,
\label{eq:def Hyp ThmDifferentCh}
\end{align}
 where a dot, as in $(.)$, is used instead of specifying the messages to emphasize that the decoder finds the location of the users, irrespective of their transmitted messages.

The average probability of synchronization error, averaged over the different hypotheses, is upper bounded by
\begin{align}
& \PP\left[\text{Synchronization error}\right]
 =\PP\left[\text{Synchronization error}  \vert H^{(2)} \right]
\label{eq:symmetry hyp ThmDifferentCh}\\
&\leq
  \PP\left[\bigcup_{i=1}^{K_n}   L_i( Y_i^n) <  T_{c(i)}  \vert H^{(2)} \right]
+ \PP\left[\bigcup_{j=K_n+1}^{A_n}\bigcup_{i=1}^{K_n}  L_i(Y_j^n)\geq   T_{c(i)}  \vert H^{(2)} \right]
\\
&\leq 
 \sum_{i=1}^{K_n} \PP\left[L_i(Y_i^n) <    T_{c(i)}  \vert H^{(2)} \right]
 +(A_n-K_n) 
 \PP\left[\bigcup_{i=1}^{K_n} L_i( Y^n)\geq  T_{c(i)}   \vert H^{(2)} \right]\\
&\leq \sum_{i=1}^{K_n} e^{-nD\left([P_{c(i)}W_{c(i)}]_{\lambda_i} \parallel [P_{c(i)}W_{c(i)}]\right)} 
+e^{n\alpha} \sum_{j=1}^{S_n}e^{-n D\left([P_{j}W_j]_{\lambda_j} \parallel  Q_{\star} \right)}\\
&= \sum_{j=1}^{S_n}\mathcal{N}_j e^{-nD\left([P_{j}W_{j}]_{\lambda_j} \parallel [P_{j}W_{j}]\right)}
+e^{n\alpha} e^{-nD\left([P_{j}W_j]_{\lambda_j} \parallel  Q_{\star} \right)},
\end{align}
where
\begin{align}
[P_jQ_j]_{\lambda}(y):=\frac{\left( [P_jQ_j](y)\right)^{\lambda} \left(Q_{\star}(y)\right)^{1-\lambda}}{\sum_{y^\prime\in \mathcal{Y}}\left( [P_jQ_j](y^\prime)\right)^{\lambda} \left(Q_{\star}(y^\prime)\right)^{1-\lambda}}.
\end{align}
The probability of error in this stage will decay to zero if for all $j\in [S_n]$
\begin{align}
\nu_j := \frac{1}{n}\log(\mathcal{N}_j)  &<D\left([P_{j}W_{j}]_{\lambda_j} \parallel [P_{j}W_{j}]\right),\\
\alpha&< D\left([P_{j}W_j]_{\lambda_j} \parallel Q_{\star} \right).
\end{align}
This retrieves the  second and third bounds in~\eqref{eq:ach different channel}.

{\bf Identification step.}
Having found the location of the `active' blocks, we move on to the second stage of the achievability scheme to identify which user is active in which block.
We note that, by the random codebook generation and the memoryless property of the channel, the output of the block occupied by user $i\in [K_n]$ is i.i.d distributed according to the marginal distribution
\begin{align}
[P_{c(i)} Q_{c(i)}](y):=\sum_{x\in \mathcal{X}}P_{c(i)}(x) Q_{c(i)}(y\vert x). 
\end{align}
We leverage this property and customize the result in Theorem~\ref{thm:main identification} to identify the different distributions of the different users. Note that at this point, we only distinguish the users with different channels from one another. In Theorem~\ref{thm:main identification}, it was assumed that all the distributions are distinct, while in here, our distributions are not necessarily distinct. The only modification that is needed in order to use the result of Theorem~\ref{thm:main identification} is  as follows. We need to consider a graph in which the edge between every two similar distributions have edge weights equal to zero (as opposed to $e^{B(P,P)}= e^{0}=1$). By doing so, we can easily conclude that the probability of identification error in our problem using an ML decoder goes to zero as blocklength $n$ goes to infinity since
\begin{align}
\PP[\text{Identification error}\vert \text{No synchronization error}]\leq \sum_{1\leq i<j\leq S_n}e^{-2nB\left([P_iQ_i], [P_jQ_j] \right)}\to 0, \label{eq:identification step error}
\end{align}
and since $S_n=\text{poly}(n)$ by assumption.

{\bf Decoding stage.} 
After finding the permutation of users in the active blocks, we can go ahead with the third stage of the achievable scheme to find the transmitted messages of the users.
In this stage, we can group the users who have similar channel $Q_j$ to get superblocks of length $n\mathcal{N}_j, j \in[S_n]$. For each superblock, we have to distinguish $(M_n)^{\mathcal{N}_j} (\mathcal{N}_j)^{\mathcal{N}_j}\approx e^{n\mathcal{N}_j(R+\nu_j)}$ different message permutations. By using a typicality decoder, we conclude that the probability of decoding error for each superblock will go  exponentially fast in blocklength to zero if
\begin{align}
R+\nu_j<I(P_j,W_j), \forall j\in [S_n].
\end{align}
This retrieves the last bound in~\eqref{eq:ach different channel} and concludes the proof.
\end{proof}

\begin{remark}\label{rmk:iid vs constant comp}
The achievability proof of Theorem~\ref{thm:same channel achievability} relies on constant composition codes whereas the achievability proof of Theorem~\ref{thm:achievable several channels} relies on i.i.d. codebooks. The reason for these restrictions is that in~\ref{thm:achievable several channels} we also need to distinguish different users and in order to use the result of~\cite{shahi_isit18}, we focused our attention on i.i.d. codebooks.
\end{remark}

\subsection{Users with no restriction on their channels}
Now we investigate a SAS-MAC with no restriction on the channels of the users. 
The key ingredient in our analysis is a novel way to bound the probability of error reminiscent of Gallager's error exponent. We show an achievability scheme that allows a positive lower bound on the rates and on $\nu$. This proves that reliable transmission with  an exponential number of users in an exponential asynchronous exponent  is possible. 
We use  an ML decoder sequentially in each block to identify the active user and its message. 

In our results, we use the following notation. The {\it Chernoff distance} between two distributions is defined as
\begin{align}
C(P,Q) := \sup_{0\leq t\leq 1} -\log \left(  \sum_{x}P(x)^{t}Q(x)^{1-t} \right).
\end{align}
We extend this definition and introduce the quantity
\begin{align}
C(P_{i},Q_i,P_{j},Q_j):=\sup_{0\leq t\leq 1} \mu_{i,j}(t),
\end{align}
where 
\begin{align}
\mu_{i,j}(t) := -\log\sum\limits_{x_i,x_j,y}P_{i}(x_i)P_{j}(x_j){Q_i(y|x_i)^{1-t}Q_j(y|x_j)^t}
\label{mu}
\end{align}
is a concave function of $t$. We also define
\begin{align}
C(.,Q_\star,P_{j},Q_j):=\sup_{0\leq t\leq 1}
-\log \left( \sum_{x_j,y}P_{j}(x_j)Q_\star(y)^{1-t}Q_j(y|x_j)^t \right)\notag
\end{align} 
to address the special case with $i=0$ and hence all users are idle.

\begin{theorem}  \label{thm:ach general}
For a SAS-MAC with asynchronous exponent $\alpha$, occupancy exponent $\nu$ and rate $R$, the following region is achievable 
\begin{align}
\bigcup_{n\geq 1}\bigcup_{\substack{P_i\in \mathcal{P}_{\mathcal{X}}}}\bigcap_{i\in [K_n]}\left\{
\begin{matrix}
\nu&<\frac{\alpha}{2}\\
    \nu+R&< B(P_{i},Q_i),  
\\ 2\nu+R&< \inf_{j\neq i}C(P_{j},Q_j,P_{i},Q_i),
\\
\alpha+\nu+R&< C(\,.\,,Q_\star,P_{i},Q_i), 
\end{matrix}\right\}
\end{align}
\end{theorem}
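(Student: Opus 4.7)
The plan is to use random i.i.d.\ codebooks together with a block-by-block ML decoder, and to control the total error by a union bound over all pairwise confusion events. First, for each user $i\in[K_n]$ independently draw $M_n$ codewords $x_i^n(m)\simiid P_i$. Since each of the $K_n$ users picks its transmission block uniformly in $[A_n]$, the probability of any two users colliding is at most $\binom{K_n}{2}/A_n \le e^{n(2\nu-\alpha)}$, which vanishes under $\nu<\alpha/2$, giving the first bound and justifying conditioning on the no-collision event: exactly $K_n$ blocks are active, each hosting a distinct user. In every block $b\in[A_n]$ the receiver runs an ML test among the $K_n M_n+1$ hypotheses---one per pair $(i,m)$ together with the idle hypothesis (output distribution $Q_\star^n$)---and by the symmetry of the uniform message/location priors, the global probability of error is controlled by the sum of per-block pairwise confusion probabilities.

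The next step is to establish the four per-block Chernoff-type bounds. For each pairwise ML event of the form $\prod_t Q_j(Y_t\mid X_{j,t}^\prime)\ge \prod_t Q_i(Y_t\mid X_{i,t})$, applying the Markov inequality with tilt $t\in[0,1]$ and then averaging over the independent codebook draws yields a single-letter exponent. The idle-vs-active tests (both false alarm at an idle block and miss at an active block) each decay as $e^{-nC(\,.\,,Q_\star,P_i,Q_i)}$. Within-user confusion with a different codeword of the same user $i$ decays as $e^{-nB(P_i,Q_i)}$, the Bhattacharyya exponent obtained by specializing to $t=1/2$ and averaging over the two independent codewords drawn from $P_i$. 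Cross-user confusion, in which both the true codeword (under $P_i,Q_i$) and the spurious codeword (under $P_j,Q_j$) are random and pass through different channels, decays as $e^{-nC(P_j,Q_j,P_i,Q_i)}$ via the concave function $\mu_{i,j}(t)$ defined in~\eqref{mu}, optimized over $t\in[0,1]$.

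Finally, assembling the global union bound amounts to counting error events. Multiplying the $A_n$ idle blocks by the $K_n M_n=e^{n(\nu+R)}$ wrong hypotheses produces $e^{n(\alpha+\nu+R)}\,e^{-nC(\,.\,,Q_\star,P_i,Q_i)}$, yielding the fourth bound; multiplying the $K_n$ active blocks by the $M_n$ within-user alternatives produces $e^{n(\nu+R)}\,e^{-nB(P_i,Q_i)}$, yielding the second bound; multiplying them by the $(K_n-1)M_n\approx e^{n(\nu+R)}$ cross-user alternatives produces $e^{n(2\nu+R)}\,e^{-nC(P_j,Q_j,P_i,Q_i)}$, yielding the third bound after taking an infimum over $j\neq i$. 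The main obstacle is the cross-user exponent: both codewords are random but drawn from different distributions and observed through different channels, so the Chernoff tilt $t$ must simultaneously reflect $(P_i,Q_i)$ and $(P_j,Q_j)$; establishing that the optimum lies in $[0,1]$ and produces exactly $C(P_j,Q_j,P_i,Q_i)$ relies on the concavity of $\mu_{i,j}$ noted after~\eqref{mu}. A secondary technical point is that all four bounds must hold uniformly across users, which is why the achievable region is written as an intersection over $i\in[K_n]$ with the cross-user bound itself an infimum over $j\neq i$.
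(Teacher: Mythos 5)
Your proposal is correct and follows essentially the same route as the paper: i.i.d.\ codebooks drawn from $P_i$, a block-by-block ML decoder over the $K_nM_n+1$ hypotheses (including idle), and a union bound with per-pair Chernoff bounds whose exponents are $B(P_i,Q_i)$, $C(P_j,Q_j,P_i,Q_i)$, and $C(\,.\,,Q_\star,P_i,Q_i)$, matched against the event counts $e^{n(\nu+R)}$, $e^{n(2\nu+R)}$, and $e^{n(\alpha+\nu+R)}$ exactly as in the paper. The only cosmetic difference is that the paper folds the miss event (deciding idle in an active block) into the cross-user sum via the index $j=0$, whereas you list it alongside the false-alarm term; this does not change the resulting region.
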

\begin{IEEEproof}
\paragraph*{Codebook generation} Each user $i\in[K_n]$ generates an i.i.d. random codebook according to the distribution $P_{i}$.

\paragraph* {Probability of error analysis} 
The receiver uses the following block by block decoder:
for each block $s\in [A_n]$, the decoder outputs
\begin{align*}
i^* \in \arg \max_{i\in[0:K_n],m\in [M_n]} Q_i(y_s^n|x_i^n(m))\,,
\end{align*}
where  $x_0^n=\emptyset$.

We now find an upper bound on  probability of error given the hypothesis $H^{(1)}$ in~\eqref{eq:def Hyp ThmSameCh} for this decoder as follows
\begin{align*}
&P_e^{(n)}\leq 
 \sum_{i\in[K_n]}\sum_{m\in[2:M_n]}\mathbb{P}\left[\log\frac{Q_i(Y_i^n\mid x_i^n(m))}{Q_i(Y_i^n\mid x_i^n(1))}>0\vert H^{(1)}\right]
\\&+ \sum_{i\in[K_n]} \sum_{\substack{j\in[0:K_n]\\j\neq i}} \sum_{m\in[M_n]}
\mathbb{P}\left[\log \frac{Q_j(Y_i^n|x_j^n(m))}{Q_i(Y_i^n|x_i^n(1))}>0\vert H^{(1)}\right]
\\&+\sum_{s\in[K_n+1:A_n]} \sum_{j\in[K_n]} \sum_{m\in [M_n]}
\mathbb{P}\left[\log\frac{Q_j(Y_s^n|x_j^n(m))}{Q_\star(Y_s^n)}>0\vert H^{(1)} \right]
\\&\leq \sum_{i\in[K_n]}e^{nR}e^{-n\sup_t - \log \mathbb{E}\left[\left( \frac{Q_i(Y_i\mid \overline{X}_i)}{Q_i(Y_i\mid X_i)} \right)^t\right]}
\\&+ \sum_{i\in[K_n]} \sum_{\substack{j\in[0:K_n]\\j\neq i}} e^{nR} e^{{-}n\sup_t {-}\log \mathbb{E}\left[\left(\frac{Q_j(Y_i|X_j)}{Q_i(Y_i|X_i)}\right)^t\right]}
\\&+e^{n\alpha} \sum_{j\in[K_n]} e^{nR} e^{-n\sup_t  {-}\log \mathbb{E}\left[\left(\frac{Q_j(Y_s|X_j)}{Q_{\star}(Y_s)}\right)^t\right]},
\end{align*}
where $P_{X,\overline{X}}(x,x^\prime)=P_{X}(x)P_{X}(x^\prime)$. The last inequality is due to the Chernoff bound.
In order for  each term in the probability of error upper bound to vanish as $n$ grows to infinity, we find the conditions stated in the theorem.  
\end{IEEEproof}

\begin{remark}
Note that  (see Appendix \ref{app:eq:in app 4}):
\begin{subequations}
\begin{align}
  B(P,Q)&:=C(P,Q,P,Q)
=-\log \sum_{x,x^\prime,y}P(x)P(x^\prime)\sqrt{Q(y|x)Q(y|x^\prime)},
\label{supremum}
\\
&C(\,.\,,Q_{\star},P_{j},Q_j)\leq I(P_{j},Q_j)+D\left([P_{j}Q_j] \parallel Q_{\star}\right),
\label{eq:in app 4}
\\
&C(P_{i},Q_{i},P_{j},Q_j)\leq I(P_{j},Q_j)+D\left(P_{i}[P_{j}Q_j] \parallel P_{i}Q_{i}\right),\label{Distance}
\end{align}
\label{eq:all bounds on C and B}
\end{subequations}
where,
due to symmetry, in $C(P,Q,P,Q)$ the  supremum is achieved at the midpoint $t=\frac{1}{2}$, and 
hence  $B(P,Q) = C(P,Q,P,Q)=\mu(\frac{1}{2})$.
The bounds in~\eqref{eq:all bounds on C and B} show that the achievable rates in Theorem~\ref{thm:ach general} are less than the corresponding point-to-point bounds.
\end{remark}
\begin{example}
Consider the SAS-MAC with asynchronous exponent $\alpha$, occupancy exponent $\nu$, and rate $R$ with input-output relationship $Y=\sum_{i\in [K_n]}X_i \oplus Z$ with $Z\sim Ber(\delta)$ for some $\delta\in(0,1/2)$. In our notation 
\begin{align}
Q(y|x)
&=\mathbb{P}[X_i\oplus Z =y|X_i=x]=\mathbb{P}[Z=x\oplus y]
\\&
= \begin{cases}
1-\delta & x \oplus y=0 \ (i.e., \ x=y)\\
  \delta & x \oplus y=1 \ (i.e., \ x\not=y)\\
\end{cases}.
\end{align}
Assume that the input distribution used is $P=Ber(p)$ for some $p\in(0,1/2)$.
The achievability region of this example, based on Theorem~\ref{thm:same channel achievability}, 
 includes the following region
\begin{align}
\bigcup_{\substack{p \in [0,\frac{1}{2}]\\\lambda \in [0,1]}}\left\{
\begin{matrix}
\nu &<\alpha/2 \\
\nu&<p \cdot d(\epsilon_\lambda \parallel \delta)\\
\alpha+R+\nu&<p\cdot d(\epsilon_\lambda \parallel 1-\delta)\\
R+\nu&<h(p*\delta)-h(p)
\end{matrix}\right\},
\label{ex:Th1BSC}
\end{align}
where
\begin{align*}
d(p\parallel q ) &:= p\log\frac{p}{q}+(1-p)\log \frac{1-p}{1-q},
\\
\epsilon_\lambda&:=\frac{\delta^\lambda (1-\delta)^{(1-\lambda)}}{\delta^\lambda (1-\delta)^{(1-\lambda)}+(1-\delta)^\lambda \delta^{(1-\delta)}},
\\
p*q &:=p(1-q)+(1-p)q.
\end{align*} 
Moreover, by assuming $P_{i}=Ber(p_i)$ for all $i\in[K_n]$, we can show that the optimal $t$ in $C(P_{i},Q_i,P_{j},Q_j)=\sup_t \mu_{i,j}(t)$ is equal to $t=1/2$ and hence the achievability region for this channel based on Theorem~\ref{thm:ach general} is given by 
\begin{align*}
\bigcup_{n\geq 1}\bigcup_{\substack{P_i\in \mathcal{P}_{\mathcal{X}}}}\bigcap_{i\in [K_n]}\left\{
\begin{matrix}
\nu&<\frac{\alpha}{2}\\
    \nu+R&<  B(P_{i},Q)=  g(p_i*p_i, \delta),  
\\ 2\nu+R&< \inf_{j\neq i}C(P_{j},Q,P_{i},Q)=\inf_{i\neq j}g(p_i*p_j, \delta),
\\
\alpha+\nu+R&<  C(\,.\,,Q_\star,P_{i},Q) =  g(p_i,  \delta),, 
\end{matrix}\right\}
\end{align*}
 where  
\begin{align*}
g(a,b) &:=-\log\Big(1-a+2a\sqrt{b(1-b)}\Big).
\end{align*}
Finally, by symmetry, we can see that the optimal $p_i=\frac{1}{2}, \forall i\in[K_n]$ and hence  
 $g(\frac{1}{2}, \delta)= -\log\left(1/2+\sqrt{ \delta(1- \delta)}\right) >0$. So on the BSC($ \delta$) strictly positive rates and $\nu$ are achievable. 
In this regrard, the region in Theorem~\ref{thm:ach general} reduces to
\begin{align}
\alpha+\nu+R < -\log\left(1/2+\sqrt{ \delta(1- \delta)}\right). \label{ex:Th2BSC}
\end{align}

The achievable region in~\eqref{ex:Th1BSC}  for $(\alpha, \nu, R)$ is shown in Fig.~\ref{fig:region bernoulli 1}. In addition, the achievable region for $(\alpha, \nu, R)$ with the achievable scheme in~\cite{shahi2016isit} is also plotted in Fig.~\ref{fig:region bernoulli 2} for comparison.  Fig.~\ref{fig:region bernoulli 1nad2} shows that the  achievable scheme in Theorem~\ref{thm:same channel achievability} indeed results in a larger achievable region than the one in Theorem~\ref{thm:ach general}. 

 \begin{figure}
 \centering
 \subfigure[Achievable region in~\eqref{ex:Th1BSC}.]{\includegraphics[width=0.45\textwidth]{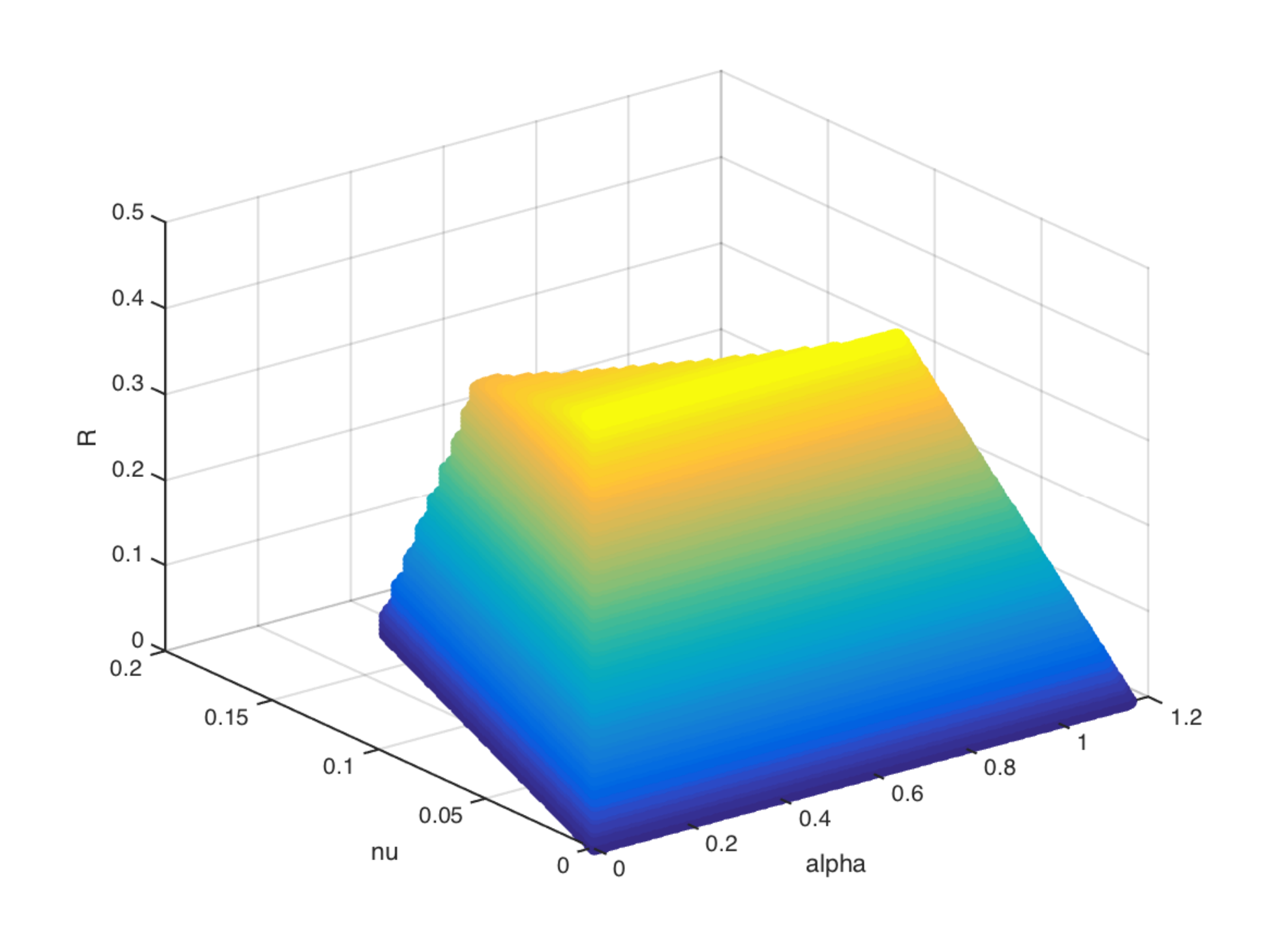}\label{fig:region bernoulli 1}}
 \subfigure[Achievable region in~\eqref{ex:Th2BSC}.]{\includegraphics[width=0.45\textwidth]{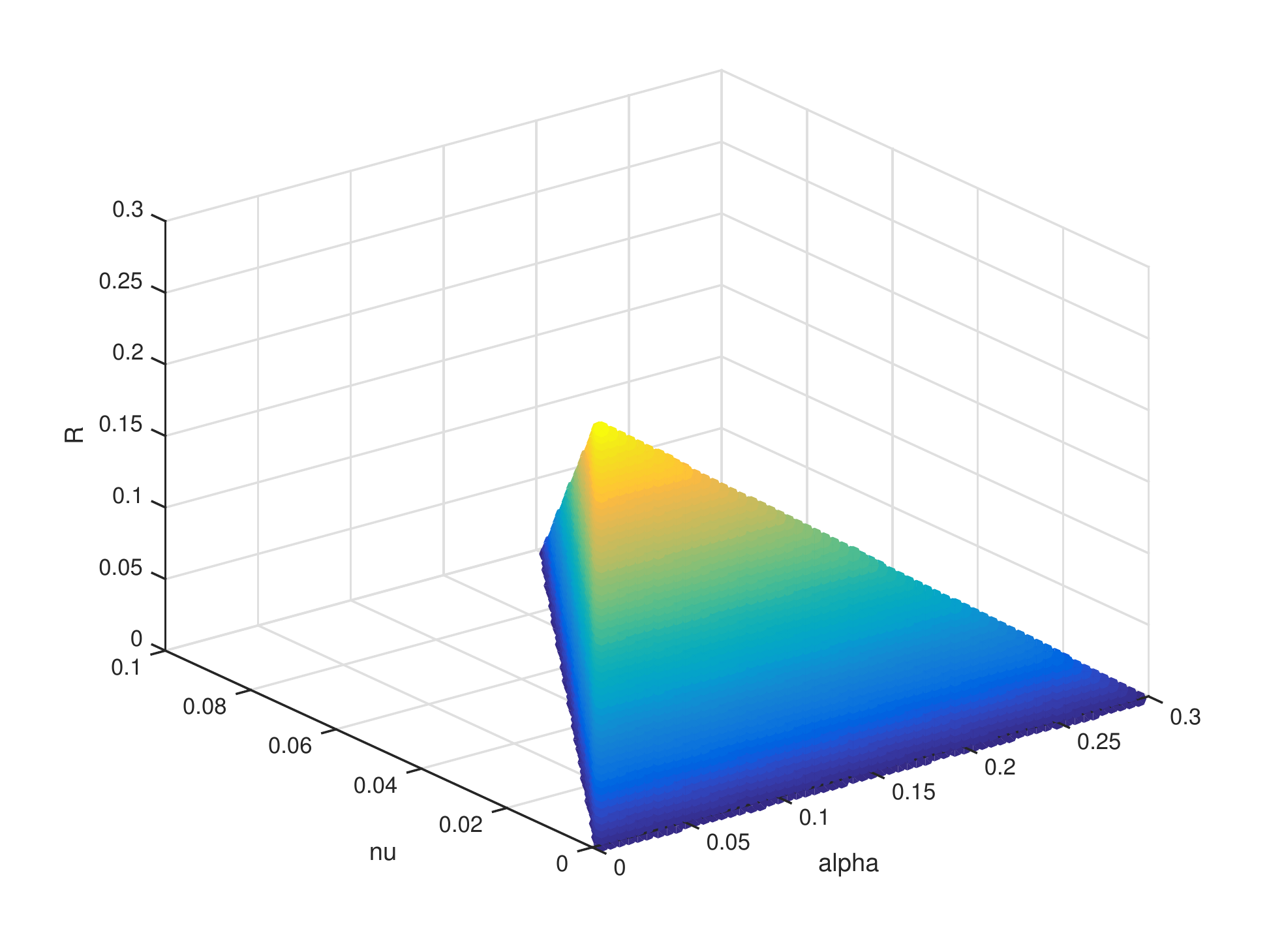}\label{fig:region bernoulli 2}}
\caption{Comparison of the achievable region in Theorem~\ref{thm:same channel achievability} and Theorem~\eqref{thm:ach general}, for the Binary Symmetric Channel with cross over probability $\delta = .11$.}
 \end{figure}
 \label{fig:region bernoulli 1nad2}
\end{example}
 Note that the fact that the achievability region for Theorem~\ref{thm:same channel achievability} is larger than the achievability region of Theorem~\ref{thm:ach general} for identical channels is not surprising. In Theorem~\ref{thm:same channel achievability} we separated the synchronization and decoding steps, whereas in Theorem~\ref{thm:ach general} synchronization and codeword decoding was done the same time, sequentially for each block. The sequential block decoding step result in smaller achievability region in Theorem~\ref{thm:ach general}. 
\subsection{Converse on the capacity region of SAS-MAC}
\label{subsec:Converse}
Thus far, we have provided achievable regions for the SAS-MAC for the cases that different users have identical channels; the case that their channels belong to a set of  size that grows polynomially in the blocklength,  and the case without any restriction on the users' channels. Theorem~\ref{thm:converse sas-mac} next provides a converse to the capacity region of SAS-MAC that holds for any choice of the users' channels.
\begin{theorem}
\label{thm:converse sas-mac}
For the SAS-MAC with asynchronous exponent $\alpha$, occupancy exponent $\nu$ and rate $R$, such that $\nu<\alpha/2$, the following region is impermissible
\begin{align}
\bigcup_{n\geq 1}\bigcup_{\substack{i\in[ K_n]\\P_i\in \mathcal{P}_{\mathcal{X}}\\\lambda_i \in [0,1]}}\left\{\left\{
\begin{matrix}
\nu&>\frac{1}{K_n}\sum_{i=1}^{K_n} D({Q_i}_{\lambda_i}\parallel Q_i \vert P_i),\\
\alpha&>\frac{1}{K_n}\sum_{i=1}^{K_n} D({Q_i}_{\lambda_i}\parallel Q_\star \vert P_i)-(1-\bar{r}_n)(\nu+R)\\
\end{matrix}\right\} \bigcup \left\{R>I(P_i,Q_i)\right\} \right\},\label{eq:thm conv general}
\end{align}
where $\bar{r}_n$ is the infimum probability of error, over all estimators $T$, in distinguishing different  hypothesis ${Q_i}_{\lambda_i}(y^n\vert x_i^n(m)), i\in [K_n], m\in [M_n]$, i.e., 
\begin{align}
\bar{r}_n := \inf_T \frac{1}{K_n M_n} \sum_{i=1}^{K_n}\sum_{m=1}^{M_n} {Q_i}_{\lambda_i}(T\neq i, m \vert x_i^n(m))).
\end{align}
\end{theorem}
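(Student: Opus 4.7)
The plan is to derive the three bounds in~\eqref{eq:thm conv general} independently, by reducing each to a binary hypothesis test and invoking a tilted change-of-measure / data-processing converse. Throughout I fix a sequence of $(e^{nR},e^{n\alpha},e^{n\nu},n,\epsilon_n)$ codes with $\epsilon_n\to 0$; since $\nu<\alpha/2$, users occupy distinct blocks with probability $1-o(1)$ and I condition on this event, exactly as was done in the achievability proofs in Theorem~\ref{thm:same channel achievability}.

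The rate bound $R<I(P_i,Q_i)$ is the classical single-user converse. Once the decoder has correctly localized user $i$ to its block (which happens with probability $1-\epsilon_n-o(1)$), it faces a standard point-to-point channel $Q_i$ carrying the codeword $x_i^n(m_i)$. Fano's inequality gives $nR\leq I(X_i^n;Y_i^n)+n\epsilon_n'\leq nI(P_i,Q_i)+n\epsilon_n'$ after maximizing over the empirical input type $P_i$ of the codebook.

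For the $\alpha$-bound I use a change of measure to the tilted channel $Q_{i,\lambda_i}$. For each block $t$, the decoder induces an indicator $\Phi_{i,m,t}(Y_t^n)$ which declares ``user $i$ sent message $m$ in block $t$''. Every false alarm on an idle block produces a decoding error, so the global error constraint forces $\beta_{i,m,t}:=\E_{Q_\star^n}[\Phi_{i,m,t}]\leq \epsilon_n/((A_n-K_n)K_nM_n)$, i.e.\ $-\log\beta_{i,m,t}\geq n(\alpha+\nu+R)+o(n)$. By the very definition of $\bar{r}_n$, the induced decoder (which is a valid test on the tilted hypothesis family) satisfies $\E_{Q_{i,\lambda_i}}[\Phi_{i,m,t}]\geq 1-\bar{r}_n$ on average over $(i,m)$. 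The data processing inequality for the binary divergence $d(\cdot\parallel\cdot)$ then yields
\begin{align*}
nD(Q_{i,\lambda_i}\parallel Q_\star\mid P_i)\;\geq\;d\bigl(1-\bar{r}_n\,\parallel\,\beta_{i,m,t}\bigr)\;\geq\;(1-\bar{r}_n)\log(1/\beta_{i,m,t})-\log 2,
\end{align*}
which, after averaging over $i$ and rearranging, gives exactly $\alpha+(1-\bar{r}_n)(\nu+R)\leq \tfrac{1}{K_n}\sum_i D(Q_{i,\lambda_i}\parallel Q_\star\mid P_i)+o(1)$, i.e.\ the required $\alpha$-bound.

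The $\nu$-bound follows by the same template with the alternative hypothesis changed from ``block $t$ is idle'' to ``some other user $j\neq i$ is active in block $t$'': counting over the $K_n-1$ other users gives $\log(1/\beta)\geq n\nu+o(n)$, and the tilted data-processing step yields $\nu\leq \tfrac{1}{K_n}\sum_i D(Q_{i,\lambda_i}\parallel Q_i\mid P_i)+o(1)$. The main obstacle is making the ensemble average over the random codebook interact correctly with the definition of $\bar{r}_n$ so that the factor $1-\bar{r}_n$ (rather than a looser bound) appears in the $\alpha$-inequality; the fact that $\bar{r}_n$ is defined against precisely the tilted hypotheses $Q_{i,\lambda_i}$ at the same $\lambda_i$ used in the data-processing step is exactly what makes this step go through.
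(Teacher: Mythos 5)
Your route is genuinely different from the paper's: you try to extract each inequality separately via a binary hypothesis-testing (meta-converse) reduction, whereas the paper lower-bounds the error of the optimal ML \emph{localization} decoder directly, by exhibiting a joint error event (an idle block whose likelihood ratio exceeds a threshold $T$ \emph{and} an active block whose ratio falls below $T$), computing the Chernoff exponents of both constituent probabilities with the compensation identity and a Fano-type lemma on average divergences to a mixture, and then using a second-moment (Chebyshev) argument over the exponentially many independent blocks. This is also why the theorem's first impermissible set requires the $\nu$- and $\alpha$-inequalities to hold \emph{jointly}: the constructed error event is an intersection, so both expected counts $\mathbb{E}[Z_1]$ and $\mathbb{E}[Z_2]$ must diverge. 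Your proposal, if it worked, would prove the stronger claim that each inequality alone is impermissible.

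However, there is a genuine gap at the step where $\bar{r}_n$ enters. You assert that the test $\Phi_{i,m,t}$ induced by the code's decoder satisfies $\mathbb{E}_{Q_{i,\lambda_i}}[\Phi_{i,m,t}]\geq 1-\bar{r}_n$ on average over $(i,m)$ ``by the very definition of $\bar{r}_n$.'' The definition gives the opposite: $\bar{r}_n$ is the \emph{infimum} over all estimators of the average error under the tilted family, so \emph{any} particular estimator --- including the one induced by the decoder, which was designed for the true channels $Q_i$, not the tilted ones --- has average error at least $\bar{r}_n$, i.e.\ average success at most $1-\bar{r}_n$. Without a lower bound on $\mathbb{E}_{Q_{i,\lambda_i}}[\Phi_{i,m,t}]$ the binary data-processing inequality $nD(Q_{i,\lambda_i}\parallel Q_\star\vert P_i)\geq d\bigl(\mathbb{E}_{Q_{i,\lambda_i}}[\Phi]\,\big\Vert\,\mathbb{E}_{Q_\star^n}[\Phi]\bigr)$ can be vacuous (e.g.\ for small $\lambda_i$ the tilted law is close to $Q_\star$ and the induced test accepts with probability close to $\beta$, making the binary divergence negligible), so the $\alpha$-bound does not follow. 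In the paper $\bar{r}_n$ is used in the correct direction: Lemma~\ref{lemma:fano} lower-bounds $\frac{1}{F_n}\sum_{i,m}D\bigl(Q^n_{i,m,\lambda_i}\parallel \overline{P^{(\lambda)}_{Y^n}}\bigr)$ by $(1-\bar{r}_n)\log F_n-h(\bar{r}_n)$, which is a statement about the \emph{optimal} estimator and is exactly what an infimum definition supports; this term then reduces the Chernoff exponent of the idle-block false-alarm probability. A secondary, repairable issue is that the global error constraint only controls the \emph{sum} over idle blocks and users of the false-alarm probabilities, not each $\beta_{i,m,t}$ individually, so a Markov/averaging selection is needed before taking logarithms; this costs only $o(n)$ but should be stated. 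To salvage your approach you would need to replace $1-\bar{r}_n$ by a quantity that genuinely lower-bounds the induced test's acceptance probability under the tilted laws, or abandon the per-block reduction in favor of the paper's joint-event construction.
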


\begin{proof}
We first define the following special shorthand notations that we will use throughout this proof
\begin{align}
F_n       &:=M_nK_n, &
\qim(y^n) &:=Q_i(y^n\vert x_i^n(m)),
\\
\py(y^n)  &:=\frac{1}{F_n}\sum_{i=1}^{K_n}\sum_{m=1}^{M_n}\qim(y^n), &
\qiml(y^n)&:= \frac{\left(\qim(y^n)\right)^{\lambda_i} \left(Q_\star^n(y^n)\right)^{1-\lambda_i}}{\sum_{y^n}\left(\qim(y^n)\right)^{\lambda_i} \left(Q_\star^n(y^n)\right)^{1-\lambda_i}},
\\
\pyl(y^n) &:=\frac{1}{F_n}\sum_{i=1}^{K_n}\sum_{m=1}^{M_n}{\qim}_{\lambda_i}(y^n),&
\left(\py\right)_t(y^n)&:=\frac{\left(\py(y^n)\right)^t\left(Q_\star^n(y^n)\right)^{1-t}}{\sum_{y^n}\left(\py(y^n)\right)^t\left(Q_\star^n(y^n)\right)^{1-t}},
\\
Q_\star^n(y^n)&:=\prod_{i=1}^n Q_\star(y_i).
\end{align}
We use the optimal Maximum Likelihood (ML) decoder to find the location of the `active' blocks. In this stage, we are not concerned about the identity or the message of the users. In this regard, the decoder  output is determined via
\begin{align}
\arg\max_{\substack{(l_1,\ldots, l_{K_n})\\l_i\neq l_j, \forall i\neq j\\ l_i\in [A_n], i\in [K_n]}}\sum_{i=1}^{K_n}\log\frac{\py(Y^n_{l_i})}{Q_\star^n(Y^n_{l_i})}.
\end{align}
Given the hypothesis that the users are active in blocks $[K_n]$, denoted by $H^{(2)}$ in~\eqref{eq:def Hyp ThmDifferentCh}, the corresponding error events to the ML decoder are given by
\begin{align}
\left\{\text{error}   \vert H^{(2)} 
\right\}&=\bigcup_{\substack{(l_1,\ldots, l_{K_n})\\\neq (1,\ldots, K_n)}}\left\{\sum_{i=1}^{K_n}\log\frac{\py(Y^n_{l_i})}{Q_\star^n(Y^n_{l_i})}>\sum_{i=1}^{K_n}\log\frac{\py(Y^n_{i})}{Q_\star^n(Y^n_{i})}\right\}\notag\\
&\supseteq \bigcup_{\substack{i\in[K_n]\\j\in[K_n+1:A_n]}}\left\{\log\frac{\py(Y^n_{j})}{Q_\star^n(Y^n_{j})}\geq \log\frac{\py(Y^n_{i})}{Q_\star^n(Y^n_{i})} \right\}\notag\\
&\supseteq \left\{\bigcup_{\substack{j\in[K_n+1:A_n]}}\log\frac{\py(Y^n_{j})}{Q_\star^n(Y^n_{j})}\geq T \right\} \bigcap \left\{ \bigcup_{i\in[K_n]} T\geq \log\frac{\py(Y^n_{i})}{Q_\star^n(Y^n_{i})} \right\},
\label{eq:converse error}
\end{align}
for any $T\in \mathbb{R}$. We restrict ourselves to a subset of $T$'s  and we take $T$ to be
\begin{align}
T:=\frac{1}{F_n}\sum_{i=1}^{K_n}\sum_{m=1}^{M_n}
\left( D\left(\qiml \parallel Q_\star^n\right)-D\left(\qiml \parallel \py \right) \right),
\end{align}
for different $\lambda_i\in [0,1], i\in[K_n]$.

We also find the following lower bounds, which are proved in  Appendix~\ref{appendix:lowerbound converse},
\begin{align}
Q_\star^n&\left[ \log \frac{\py}{Q_\star^n}(Y^n)\geq T\right]\geq 
e^{-\frac{n}{K_n}\left(\sum_{i=1}^{K_n} D({Q_i}_{\lambda_i}\parallel Q_\star \vert P_i)-(R+\nu)(1-\widebar{r}_n)+\frac{h(\widebar{r}_n)}{n}\right)}
,\label{eq:noise prob}
\\
\py&\left[\log \frac{\py}{Q_\star^n}(Y^n)\leq T \right]\geq  
e^{-\frac{n}{K_n}\sum_{i=1}^{K_n} D({Q_i}_{\lambda_i}\parallel Q_i \vert P_i)}
.\label{eq:code prob}
\end{align}
By using the inequalities in~\eqref{eq:noise prob} and~\eqref{eq:code prob}, we find a lower bound on the probability of~\eqref{eq:converse error} as follows:
\begin{align}
&\PP\left[\bigcup_{\substack{j\in[K_n+1:A_n]}}\log\frac{\py(Y^n_{j})}{Q_\star^n(Y^n_{j})}\geq T  \cap  \bigcup_{i\in[K_n]} T\geq \log\frac{\py(Y^n_{i})}{Q_\star^n(Y^n_{i})}  \vert H^{(2)} \right]
\\
&=\PP\left[\bigcup_{\substack{j\in[K_n+1:A_n]}}\log\frac{\py(Y^n_{j})}{Q_\star^n(Y^n_{j})}\geq T \vert H^{(2)}\right]
  \PP\left[\bigcup_{i\in[K_n]} T\geq \log\frac{\py(Y^n_{i})}{Q_\star^n(Y^n_{i})} \vert H^{(2)} \right]
\label{eq:independence blocks}
\\
&=:\PP\left[ Z_1\geq 1\right] \PP[Z_2\geq 1]
\\
&\geq \left( 1-\frac{\var[Z_1]}{\E^2[Z_1]}\right)\left( 1-\frac{\var[Z_2]}{\E^2[Z_2]}\right)
\label{eq:chebyshev conv}
\\
&\geq \left( 1-\frac{1}{\sum_{j=K_n+1}^{A_n}\PP[\xi_j=1]}\right)\left( 1-\frac{1}{\sum_{i=1}^{K_n}\PP[\zeta_i=1]}\right)
\\
&\geq\left(1-e^{-n\alpha +\frac{n}{K_n}\left(\sum_{i=1}^{K_n} D({Q_i}_{\lambda_i}\parallel Q_\star \vert P_i)-(R+\nu)(1-\widebar{r}_n)+\frac{h(\widebar{r}_n)}{n}\right)} \right) \left(1-e^{-n\nu+\frac{n}{K_n}\sum_{i=1}^{K_n} D({Q_i}_{\lambda_i}\parallel Q_i \vert P_i)} \right),\label{eq:conv final lbb}
\end{align}
where~\eqref{eq:independence blocks} follows by independence of $Y_i^n$ and $Y_j^n$ whenever $i\neq j, \forall i, j\in [A_n]$ and the inequality in~\eqref{eq:chebyshev conv} is by Chebyshev's inequality, where we have defined
\begin{align}
Z_1 &:= \sum_{j=K_n+1}^{A_n} \xi_j, &&\xi_j:=Ber\left(Q_\star^n\left[ \log \frac{\py}{Q_\star^n}(Y_j^n)\geq T\right] \right),\\
Z_2 &:= \sum_{i=1}^{K_n} \zeta_i,   &&\zeta_i:=Ber\left(\py\left[\log \frac{\py}{Q_\star^n}(Y_i^n)\leq T \right] \right),
\end{align}
where $\{\xi_j,\zeta_i\}$ are mutually independent.
We can see from~\eqref{eq:conv final lbb} that if
\begin{align}
\nu&>\lim_{n\to \infty}\left\{\frac{1}{K_n}\sum_{i=1}^{K_n} D({Q_i}_{\lambda_i}\parallel Q_i \vert P_i)\right\},\\
\alpha&>\lim_{n\to \infty}\left\{\frac{1}{K_n}\sum_{i=1}^{K_n} D({Q_i}_{\lambda_i}\parallel Q_\star \vert P_i)-(1-\bar{r}_n)(\nu+R)\right\},
\end{align}
then the probability of error is strictly bounded away from zero and hence it is impermissible.
Moreover, the usual converse bound on the rate of a synchronous channel also applies to any asynchronous channel  and hence the region where $R>I(P,Q)$ is also impermissible. 
This concludes the proof.
\end{proof}
{ It should be noted that even though the expression~\eqref{eq:thm conv general} involves a union over all blocklengths $n$, in order to compute this bound, we only need to optimize with respect to $P_i, i\in [K_n]$ (as opposed to $P^n$ in the conventional $n$-letter capacity expressions). However, since we still have exponential (in blocklength $n$)  number of users, and in theory we have to optimize all of their distributions, we need to take the union with respect to all blocklengths.}
\subsection{Converse on the number of users in a SAS-MAC}
{ In previous sections and in our achievability schemes, we restricted ourselves to the region where $\nu<\frac{\alpha}{2}$ to be able to simplify the analysis. However, an interesting question is that  irrespective of the achievability scheme and the decoder, how large a $\nu$ we can have in the network. Theorem~\ref{thm:conv nu} provides a converse bound on the value of $\nu$ such that for $\nu>\alpha$, not even reliable synchronization is not possible. 

\begin{theorem} ~\label{thm:conv nu}
For a SAS-MAC with asynchornous exponent $\alpha$ and occupancy exponent $\nu : \nu>\alpha$, 
reliable
synchronization is not possible,  i.e., even with $M=1$, one has $P_e^{(n)}>0.$ 
\end{theorem}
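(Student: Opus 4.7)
The plan is to reduce the problem to a pure block-index recovery problem (since $M=1$ there is nothing else to decode) and then apply Fano's inequality together with a trivial capacity-of-the-observation bound. With $M=1$ each user $i\in[K_n]$ transmits its fixed signal $x_i^n(1)$ in the block $T_i$ it chose uniformly at random from $[A_n]$, so the task of the decoder is to recover the random vector $T^{K_n}:=(T_1,\ldots,T_{K_n})\in[A_n]^{K_n}$ from $Y^{nA_n}$. Since the $T_i$'s are i.i.d.\ uniform, the source has entropy $H(T^{K_n})=K_n\log A_n=n\alpha\, e^{n\nu}$, whereas the output can carry at most $nA_n\log|\mathcal{Y}|=n\log|\mathcal{Y}|\, e^{n\alpha}$ nats. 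When $\nu>\alpha$ the former dwarfs the latter as $n\to\infty$, and the Fano argument will turn this into a positive lower bound on $P_e^{(n)}$.

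Concretely, I would first observe that $\hat{T}^{K_n}=g(Y^{nA_n})$ is a deterministic function of $Y^{nA_n}$, so by Fano's inequality and the data-processing inequality
\begin{align*}
H\bigl(T^{K_n}\mid Y^{nA_n}\bigr)\ \le\ H\bigl(T^{K_n}\mid \hat T^{K_n}\bigr)\ \le\ h(P_e^{(n)}) + P_e^{(n)}\,K_n\log A_n.
\end{align*}
Rearranging $I(T^{K_n};Y^{nA_n})=H(T^{K_n})-H(T^{K_n}\mid Y^{nA_n})$ yields
\begin{align*}
I\bigl(T^{K_n};Y^{nA_n}\bigr)\ \ge\ (1-P_e^{(n)})\,K_n\log A_n - h(P_e^{(n)}).
\end{align*}
On the other hand, the same mutual information is trivially bounded by $H(Y^{nA_n})\le nA_n\log|\mathcal{Y}|$. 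Combining the two estimates and dividing by $ne^{n\alpha}$ gives
\begin{align*}
(1-P_e^{(n)})\,\alpha\, e^{n(\nu-\alpha)}\ \le\ \log|\mathcal{Y}| + \tfrac{h(P_e^{(n)})}{n e^{n\alpha}}.
\end{align*}
If $\nu>\alpha$, the left-hand side diverges as $n\to\infty$ unless $P_e^{(n)}$ stays bounded away from $0$, which contradicts reliable synchronization. Hence $P_e^{(n)}\not\to 0$, as claimed.

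The argument is in fact mostly a cardinality mismatch: a deterministic decoder $g:\mathcal{Y}^{nA_n}\to[A_n]^{K_n}$ can take at most $|\mathcal{Y}|^{nA_n}=e^{n\log|\mathcal{Y}|\,e^{n\alpha}}$ distinct values, whereas the hypothesis space has size $A_n^{K_n}=e^{n\alpha\,e^{n\nu}}$, and the ratio is doubly-exponentially tiny when $\nu>\alpha$, so by a simple pigeonhole-and-average one even gets $P_e^{(n)}\to 1$. The only mild technical obstacle is handling possibly infinite output alphabets: if $\mathcal{Y}$ is not finite one must replace the crude bound $H(Y^{nA_n})\le nA_n\log|\mathcal{Y}|$ with an appropriate entropy bound from whatever channel constraint the model imposes (e.g.\ a Gaussian maximum-entropy bound under a power constraint), but the exponential scaling in $n$, and hence the $\nu>\alpha$ threshold, is unaffected.
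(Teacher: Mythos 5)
Your proof is correct and follows essentially the same route as the paper: Fano's inequality applied to the hypothesis (block-index) vector, combined with the trivial bound $I(T^{K_n};Y^{nA_n})\le H(Y^{nA_n})\le nA_n\log|\mathcal{Y}|$, so that $K_n\log A_n = n\alpha e^{n\nu}$ cannot be supported by $n e^{n\alpha}\log|\mathcal{Y}|$ when $\nu>\alpha$. The paper phrases this via an ``extended codebook'' and general $M$, while you specialize to $M=1$ and make the Fano step explicit rather than assuming reliability up front, but the argument is the same.
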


\begin{figure}[htbp]
\centering
\includegraphics[width=.8\textwidth]{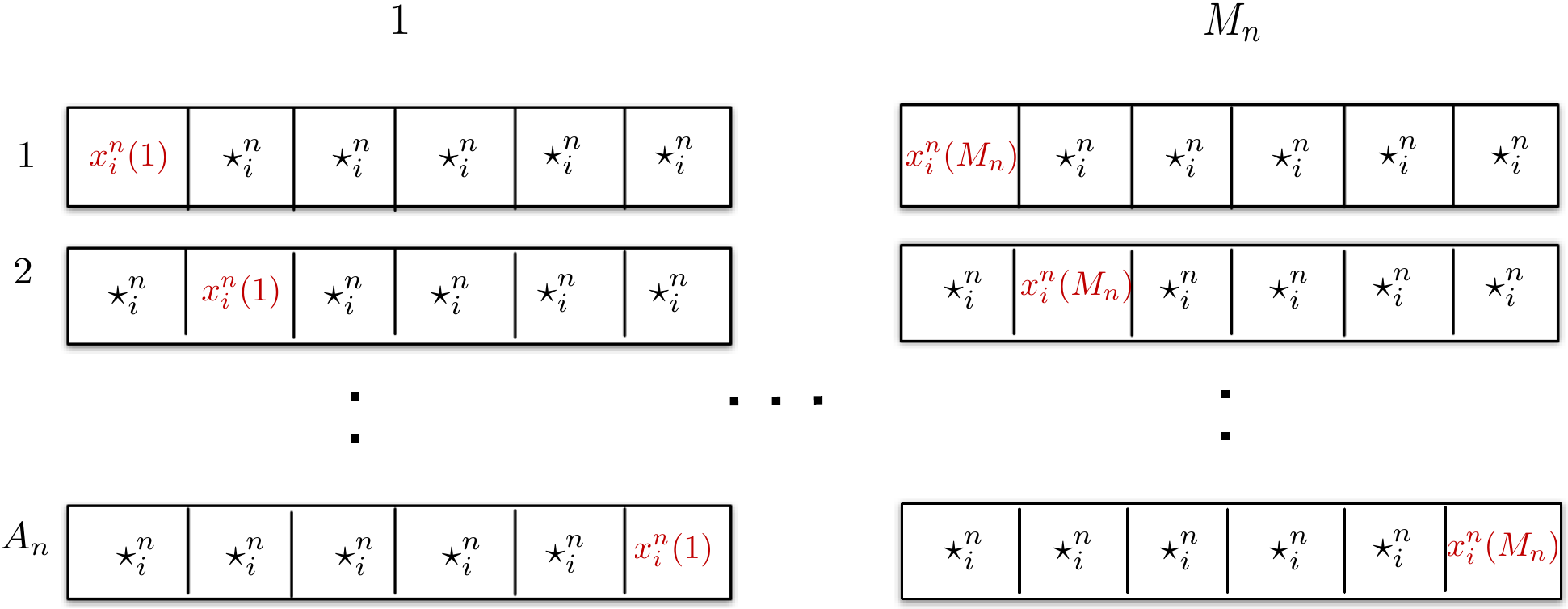}
\caption{Extended codebook.}
\label{extended_codebook}
\end{figure}

\begin{IEEEproof}
User $i\in [{K_n}]$ has a codebook with $M_n=e^{nR}$ codewords of length $n$. 
Define for $i\in [{K_n}]$ an `extended codebook' consisting of $A_nM_n$ codewords of length $nA_n$ constructed such that $\forall m\in [M_n]$ and $\forall t_i \in [A_n]$
\begin{align*}
\widetilde{X}_i^{nA_n}(m_i,t_i):=\big[\star_i^{n(t_i-1)} \,f_i(m_i)\, \star_i^{n(A_n-t_i)}\big],
\end{align*} 
as depicted in Fig.~\ref{extended_codebook}. 
By using Fanno's inequality, i.e., $H({X}_1^{nA_n},\ldots,{X}_{K_n}^{nA_n}|Y^{nA_n}) \leq n \epsilon_n : \epsilon_n\to 0$ as $n\to\infty$, for any codebook of length $nA_n$ we have
\begin{align*}
&H({X}_1^{nA_n},\ldots, {X}_{K_n}^{nA_n})
=H(m_1,t_1,\ldots,m_{K_n},t_{K_n})
\\&=n K_n\left( \alpha+R\right)
\\&=H({X}_1^{nA_n},\ldots,{X}_{K_n}^{nA_n}|Y^{nA_n})+I({X}_1^{nA_n},\ldots,{X}_{K_n}^{nA_n};Y^{nA_n})
\\&\leq n\epsilon_n+ne^{n\alpha} \ |\mathcal{Y}|
\quad \Longleftrightarrow
\\&\nu+\frac{\log\left(1+\frac{1}{\alpha K_n}\sum_{i\in[K_n]}R_i \right)}{n}\leq \alpha+
\frac{\log\left(1+\frac{\epsilon_n}{e^{n\alpha}\mid\mathcal{Y}\mid}\right)}{n},
\end{align*}
where $\frac{\log\left(1+\frac{1}{\alpha K_n}\sum_{i\in[K_n]}R_i \right)}{n} \geq 0$ and $\frac{\log (1+\frac{\epsilon_n}{e^{n\alpha}\mid\mathcal{Y}\mid})}{n} \geq 0$ vanish as $n$ goes to infinity. This implies that $\nu \leq \alpha$ is a necessary condition for reliable communications.
In other words, for $\nu > \alpha$ not even synchronization (i.e., $M_n=1, \forall i\in[K_n]$)  is possible.
\end{IEEEproof}
}

\section{Discussion and Conclusion}
\label{sec:conclusion}
In this paper we studied a Strongly Asynchronous and Slotted Massive Access Channel (SAS-MAC) where $K_n:=e^{n\nu}$ different users transmit a randomly selected message among $M_n:=e^{nR}$ ones within a strong asynchronous window of length $A_n:=e^{n\alpha}$ blocks of $n$ channel uses each.  We found  inner and outer bounds on the $(R, \alpha, \nu)$ tuples. Our analysis is based on a global probability of error in which we required all users messages and identities to be jointly correctly decoded. Our results are focused on the region $\nu<\frac{\alpha}{2}$, where the probability of user collisions in vanishing.  We  proved in Theorem~\ref{thm:conv nu}  that for the region $\nu>\alpha$, not even synchronization is possible. Hence, we would like to take this chance to discuss some of the difficulties that one may face in analyzing the region $ \frac{\alpha}{2}\leq \nu\leq \alpha$. 

As we have mentioned before, for the region $\nu<\frac{\alpha}{2}$, with probability  
$\frac{\binom{A_n}{K_n}}{(A_n)^{K_n}}$ which approaches to one as blocklength $n$ goes to infinity, the users transmit in distinct blocks. Hence, in analyzing the probability of error of our achievable schemes, we could safely condition on the hypothesis that users are not colliding.  For the region $\frac{\alpha}{2}\leq \nu\leq \alpha$, we lose this simplifying assumption. In particular, based on Lemma~\ref{lemma:no overlapping} (proved in the Appendix~\ref{appendix: no overlapping}), for the region $\frac{\alpha}{2}\leq \nu\leq \alpha$, the probability of every arrangement of users is itself vanishing in the blocklength. 
\begin{lemma}
\label{lemma:no overlapping}
For the region $\frac{\alpha}{2}\leq \nu\leq \alpha$ the non-colliding arrangement of users has the highest probability among all possible arrangements, yet, the probability of this event is also vanishing as blocklength $n$ goes to infinity. 
\end{lemma}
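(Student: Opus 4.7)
The plan is to interpret an \emph{arrangement} of the users as the set-partition of $[K_n]$ induced by their block choices: two users lie in the same part iff they select the same block. Under this interpretation the non-colliding arrangement is the discrete partition into $K_n$ singletons, and every arrangement with at least one collision is a strict coarsening thereof. Both halves of the lemma then follow from a monotonicity in the number of parts of an individual partition.

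For a fixed set-partition $\sigma$ of $[K_n]$ with $k$ parts, the event that the random block choices $(t_1,\ldots,t_{K_n})$ realize $\sigma$ requires each part to map to a common block and distinct parts to map to distinct blocks. Since the $t_i$ are i.i.d.\ uniform on $[A_n]$,
\[
\Pr[\sigma] \;=\; \frac{A_n(A_n-1)\cdots(A_n-k+1)}{A_n^{K_n}},
\]
which depends on $\sigma$ only through $k$. Because $\Pr[\sigma_{k+1}]/\Pr[\sigma_k] = A_n - k \geq 1$ throughout the feasible range $k\leq K_n\leq A_n$, the probability is non-decreasing in $k$ and is therefore maximized at $k=K_n$, which is precisely the non-colliding partition. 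This proves the first assertion.

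For the second assertion, specialize to $k=K_n$ and apply $\log(1-x)\leq -x$:
\[
\Pr[\text{no collision}] \;=\; \prod_{i=0}^{K_n-1}\left(1-\frac{i}{A_n}\right) \;\leq\; \exp\!\left(-\frac{K_n(K_n-1)}{2A_n}\right).
\]
Substituting $K_n=e^{n\nu}$ and $A_n=e^{n\alpha}$, the exponent becomes $-\tfrac{1}{2}\,e^{n(2\nu-\alpha)}(1+o(1))$, which diverges to $-\infty$ whenever $\nu>\alpha/2$; hence the non-colliding probability vanishes (doubly exponentially in $n$) on the open range $\alpha/2<\nu\leq\alpha$. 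The argument is little more than a one-line combinatorial identity combined with the standard Taylor bound, so no substantive obstacle is anticipated. The only delicate point is the boundary $\nu=\alpha/2$ itself, where $K_n^2/A_n$ remains of constant order and the bound only yields a positive constant limit (of the form $e^{-c}$); the ``vanishing'' conclusion should therefore be understood on the strict open range $\nu>\alpha/2$.
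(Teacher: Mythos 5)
Your proof is correct, and it takes a genuinely different route from the paper's. The paper encodes an ``arrangement'' as the occupancy profile $(t_1,\ldots,t_{A_n})$ of the blocks, whose probability is proportional to the multinomial coefficient $K_n!/(t_1!\cdots t_{A_n}!)$, and argues by a local exchange: if two blocks had counts differing by more than one, moving a user from the fuller to the emptier block would strictly increase that coefficient, so the maximizer must have all counts in $\{0,1\}$ (using $K_n\leq A_n$), i.e., be non-colliding. You instead encode an arrangement as the set-partition of the users induced by collisions, observe that a partition with $k$ parts has probability $A_n(A_n-1)\cdots(A_n-k+1)/A_n^{K_n}$, and conclude by monotonicity in $k$. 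Both readings of ``arrangement'' are legitimate for this informally stated lemma and both arguments are sound; yours is global rather than local, and it avoids having to argue separately that ``all counts within one of each other'' forces counts in $\{0,1\}$. More importantly, your proof actually establishes the second assertion of the lemma --- that the non-colliding probability vanishes --- via $\prod_{i=0}^{K_n-1}(1-i/A_n)\leq \exp\left(-K_n(K_n-1)/(2A_n)\right)$, whereas the paper's appendix proves only the first assertion and leaves the vanishing claim unargued. You are also right to flag the boundary: at $\nu=\alpha/2$ one has $K_n^2/A_n\to 1$ and the no-collision probability tends to a positive constant (indeed $\prod_{i=0}^{K_n-1}(1-i/A_n)\geq 1-K_n(K_n-1)/(2A_n)\to 1/2$), so the ``vanishing'' conclusion holds only on the open range $\nu>\alpha/2$; this is a minor imprecision in the lemma's statement, not a gap in your argument.
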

As a consequence of Lemma~\ref{lemma:no overlapping}, one needs to propose an achievable scheme that accounts for every possible arrangement and collision of users and drives the probability of error in all (or most) of the hypothesis to zero. It is also worth noting that the number of possible hypotheses is doubly exponential in the blocklength. Finally, it is worth emphasizing the reason why the authors in~\cite{chandar2015note} can get to $\nu\leq \alpha$. In~\cite{chandar2015note} the authors require the recovery of the messages of a {\it large fraction} of users and they also require the per-user probability of error to be vanishing. To prove whether or not strictly positive $(R, \alpha, \nu)$ are possible in the region $\frac{\alpha}{2}\leq \nu\leq \alpha$, with vanishing global probability of error, is an open problem.

\appendix
\subsection{Proof of Lemma~\ref{lemma:graph}} ~\label{app:lemma graph proof}
 We first consider the case that r is an even number and then prove
 \begin{align}
r\!\left(n_k\right)^{\frac{r}{2}-1}\left( G(c_1)+\ldots G(c_{N_{r,k}})\right) \!\leq\! \frac{N_{r,k}r}{n_k}\! \left(a_1^2+\ldots+ {a_{n_k}}^2\right)^{\frac{r}{2}}\!.\label{eq:main ineq}
\end{align}
We may drop the subscripts and use $N:=N_{r,k}$ and $n:=n_k$ in the following for notational ease.
Our goal is to expand the right hand side (RHS) of~\eqref{eq:main ineq} such that all elements have coefficient $1$. Then, we parse these elements into $N$ different groups (details will be provided later) such that using the AM-GM inequality (i.e., $ n\left(\prod_{i=1}^n a_i\right)^{\frac{1}{n}}\leq \sum_{i=1}^n a_i$) on each group, we get one of the $N$ terms on the LHS of~\eqref{eq:main ineq}. Before stating the rigorous proof, we provide an example of this strategy for the graph with $k=4$ vertices shown in Fig.~\ref{fig:G6}. In this example, we consider the Lemma for $r=4$ cycles (for which $N=3$). 
\begin{figure}[htbp]
\centering
\includegraphics[width=.15\textwidth]{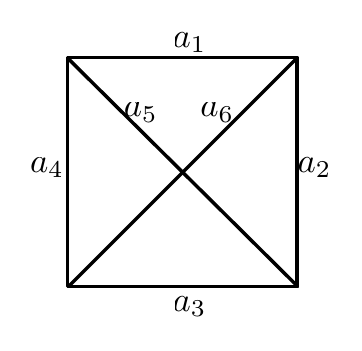}
\caption{A complete graph with $4$ vertices}
\label{fig:G6}
\end{figure}

We may expand the RHS in~\eqref{eq:main ineq} as
\begin{align*}
&\qquad2\left(a_1^2+\ldots +a_6^2 \right)^2
=\Theta_1+\Theta_2+\Theta_3,\\
&\Theta_1\!=\!\big\{a_1^4+a_2^4+a_3^4+a_4^4+a_1^2a_3^2+a_1^2a_3^2+a_2^2a_4^2+a_2^2a_4^2\\
&+a_1^2a_2^2+a_1^2a_2^2+a_1^2a_2^2+a_1^2a_2^2+a_1^2a_4^2+a_1^2a_4^2+a_1^2a_4^2+a_1^2a_4^2
\\&+a_2^2a_3^2+a_2^2a_3^2+a_2^2a_3^2+a_2^2a_3^2+a_3^2a_4^2+a_3^2a_4^2+a_3^2a_4^2+a_3^2a_4^2\big\}
\\&\Theta_2\!=\!\big\{a_1^4+a_6^4+a_3^4+a_5^4+a_5^2a_6^2+a_5^2a_6^2+a_1^2a_3^2+a_1^2a_3^2
\\&
+a_1^2a_6^2+a_1^2a_6^2+a_1^2a_6^2+a_1^2a_6^2+a_1^2a_5^2+a_1^2a_5^2+a_1^2a_5^2+a_1^2a_5^2
\\&+a_3^2a_6^2+a_3^2a_6^2+a_3^2a_6^2+a_3^2a_6^2+a_3^2a_5^2+a_3^2a_5^2+a_3^2a_5^2+a_3^2a_5^2\big\}
\\&\Theta_3\!=\!\big\{a_4^4+a_5^4+a_2^4+a_6^4+a_5^2a_6^2+a_5^2a_6^2+a_2^2a_4^2+a_2^2a_4^2
\\&
+a_4^2a_5^2+a_4^2a_5^2+a_4^2a_5^2+a_4^2a_5^2
+a_4^2a_6^2+a_4^2a_6^2+a_4^2a_6^2+a_4^2a_6^2
\\&
+a_2^2a_5^2+a_2^2a_5^2+a_2^2a_5^2+a_2^2a_5^2+a_2^2a_6^2+a_2^2a_6^2+a_2^2a_6^2+a_2^2a_6^2\big\}.
\end{align*}
It can be easily seen that if we use the AM-GM inequality on $\Theta_1$, $\Theta_2$ and $\Theta_3$, we can get the lower bound equal to $24(a_1a_2a_3a_4), 24(a_1a_6a_3a_5)$ and $24(a_4a_5a_2a_6)$, respectively where $rn^{\frac{r}{2}-1}=24$ and hence~\eqref{eq:main ineq} holds in this example.

We proceed to prove  Lemma~\ref{lemma:graph} for arbitrary $k$ and (even) $r\geq 2$. We propose the following scheme   to group the elements on the RHS of~\eqref{eq:main ineq} and then we prove that this grouping indeed leads to the claimed inequality in the Lemma. 

\paragraph*{ Grouping scheme} For each cycle $c_i =\{a_{i_1}\ldots , a_{i_r}\}$, we need a group of elements, $\Theta_i$, from the RHS of~\eqref{eq:main ineq}. In this regard, we consider all possible subsets of the edges of cycle $c_i$ with $1:\frac{r}{2}$ elements (e.g. $\left\{\{a_{i_1}\},\ldots \{a_{i_1},a_{i_2}\},\ldots \{a_{i_1}\ldots, a_{i_{r/2}}\},\ldots\right\} $). For each one of these subsets, we find the respective elements from the RHS of~\eqref{eq:main ineq} that is the multiplication of the elements in that subset. For example, for the subset $\{a_{i_1},a_{i_2},a_{i_3}\}$, we consider the elements like $a_{i_1}^{n_{i_1}}a_{i_2}^{n_{i_2}}a_{i_3}^{n_{i_3}}$ for all possible $n_{i_1},n_{i_2},n_{i_3}>0$ from the RHS of~\eqref{eq:main ineq}. However, note that we do not assign all such elements to cycle $c_i$ only. If there are $l$ cycles of length $r$ that all contain $\{a_{i_1},a_{i_2},a_{i_3}\}$, we should assign $\frac{1}{l}$ of the elements like $a_{i_1}^{n_{i_1}}a_{i_2}^{n_{i_2}}a_{i_3}^{n_{i_3}}, \ n_{i_1},n_{i_2},n_{i_3}>0$ to cycle $c_i$ (so that we can assign the same amount of elements to other cycles with similar edges). 

We state some facts, which  can be easily verified:

{\bf Fact 1.} In a complete graph $K_k$, there are $N=N_{r,k}=\binom{k}{r}\frac{(r-1)!}{2}$ cycles of length $r$. 

{\bf Fact 2.} By expanding the RHS of~\eqref{eq:main ineq} such that all elements have coefficient $1$, we end up with $\left(\frac{N r}{n}\right) n^{\frac{r}{2}}$ elements.

{\bf Fact 3.} Expanding the RHS of~\eqref{eq:main ineq} such that all elements have coefficient $1$, and finding their product yields
\[\left(a_1\times\ldots\times a_n \right)^{\left(\frac{Nr}{n}\right)rn^{ \frac{r}{2}-1}}.\]

{\bf Fact 4.} In above grouping scheme each element on the RHS of~\eqref{eq:main ineq} is summed in exactly one group. Hence, by symmetry and Fact 2, each group is the sum of $r n^{\frac{r}{2}-1}$ elements.

Now, consider any two cycles $c^{(e)}_i=\{a_{i_1},\ldots , a_{i_r}\},c^{(e)}_j=\{a_{j_1},\ldots , a_{j_r}\} $. Assume that using the above grouping scheme, we get the group of elements $\Theta_i,\Theta_j$ (where by fact 3 each one is the sum of $r n^{\frac{r}{2}-1}$ elements).
If we apply the AM-GM inequality on each one of the two groups, we get 
\begin{align*}
\Theta_i\geq r n^{\frac{r}{2}-1} \left( a_{i_1}^{n_{i_1}}\times \ldots\times a_{i_r}^{n_{1_r}}  \right)^{\left(\frac{1}{r n^{\frac{r}{2}-1}}\right)},  \\
\Theta_j\geq r n^{\frac{r}{2}-1} \left( a_{j_1}^{n_{j_1}}\times \ldots\times a_{j_r}^{n_{j_r}}  \right)^{\left(\frac{1}{r n^{\frac{r}{2}-1}}\right)},
\end{align*}
where $\prod_{t=1}^r a_{i_t}^{n_{i_t}}$ is the product of the elements in $\Theta_i$.
 By symmetry of the grouping scheme for different cycles, it is obvious that $\forall t\in[r], n_{i_t}=n_{j_t}$. Hence $ n_{i_t}=n_{j_t}=p_t,\forall i,j\in [N]$. 
 i.e., we have
 \begin{align}
  \Theta_i &\geq r n^{\frac{r}{2}-1} \left( a_{i_1}^{p_1}\times \ldots\times a_{i_r}^{p_r}  \right)^{\left(\frac{1}{r n^{\frac{r}{2}-1}}\right)}\label{eq:p}.
 \end{align}

 By symmetry of the grouping scheme over the elements of each cycle, we also get that $n_{i_k}=n_{i_l}=q_i,\forall k,l\in [r]$. i.e.
 \begin{align}
 \Theta_i \geq r n^{\frac{r}{2}-1} \left( a_{i_1}^{q_i}\times \ldots\times a_{i_r}^{q_i}  \right)^{\left(\frac{1}{r n^{\frac{r}{2}-1}}\right)}.\label{eq:q}
 \end{align}
 It can be seen from~\eqref{eq:p} and~\eqref{eq:q} that all the elements of all groups have the same power  $n_{i_t}=p,\forall i\in[N], t\in [r]$. i.e.,
  \begin{align*}
 \Theta_i &\geq r n^{\frac{r}{2}-1} \left( a_{i_1}^{p}\times \ldots\times a_{i_r}^{p}  \right)^{\left(\frac{1}{r n^{\frac{r}{2}-1}}\right)}.
 \end{align*}
 Since each element on the RHS of~\eqref{eq:main ineq} is assigned to one and only one group and since $\prod_{t=1}^r a_{i_t}^{n_{i_t}}= \prod_{t=1}^r a_{i_t}^p$ is the product of the elements of each group $\Theta_i$, the product of all elements in $\Theta_1+\ldots +\Theta_{N}$ (which is equal to product of the elements in the expanded version of the RHS of~\eqref{eq:main ineq}) is
 $\prod_{i=1}^{N}\prod_{t=1}^r a_{i_t}^p$.
 
 In addition, since each $a_i$ appears in exactly $\frac{Nr}{n}$ of the cycles, by Fact 3 and a double counting argument, we have 
 \[p\times \frac{Nr}{n}=\left( \frac{Nr}{n}\right) rn^{ \frac{r}{2}-1},\]
 and hence $p=rn^{ \frac{r}{2}-1}$.
 Hence, the lower bound of the AM-GM inequality on the $\Theta_1+\ldots+ \Theta_{N}$, will result in
  \[rn^{ \frac{r}{2}-1} G(c_1)+\ldots+ rn^{ \frac{r}{2}-1} G(c_{N_r}),\] and  the Lemma is proved for even $r$.
 
 For odd values of $r$, the problem that may arise by using the grouping strategy in its current form, is when $r<\frac{k}{2}$. In this case, some of the terms on the RHS of~\eqref{eq:main ineq} may contain   multiplication of $a_i$'s that are not present in any of the $G(c_i)$'s. To overcome this, take both sides to the power of $2m$ for the smallest $m$ such that $rm>\frac{k}{2}$. Then the RHS of~\eqref{eq:main ineq} is at most the multiplication of $rm$ different $a_i$'s and on the LHS of~\eqref{eq:main ineq},  there are $2m$ cycles of length $r$ multiplied together. By our choice of $2m$, now, all possible combinations of $a_i$'s on the RHS are present in at least one cycle multiplication in the LHS. Hence, it is straightforward to use the same strategy as even values of $r$ to prove the theorem for the odd values of $r$.

\subsection{Proof of~\eqref{eq:ident ach}}~\label{app:ident ach}
By Lemma~\ref{lemma:graph} and~\eqref{eq:achieve graph} we can write
\begin{align}
P_e^{(n)}&\leq \sum_{r=2}^{A_n}\sum_{c\in C_{A_n}^{(r)}}G(c)\notag\\
&\leq \sum_{r=2}^{A_n} \frac{N_{r,A_n}}{\left({n_{A_n}}\right)^{\frac{r}{2}}}\left( a_1^2+\ldots+a_{n_{A_n}}^2\right)^{r/2}\notag\\
&\leq \sum_{r=2}^{A_n} 4^r\left( \sum_{\substack{1\leq i<j\leq A_n}}e^{-2nB(P_i,P_j)}\right)^{r/2}\label{eq:nN}\\
&\leq \frac{16\left(\sum_{\substack{1\leq i<j\leq A_n}}e^{-2nB(P_i,P_j)}\right)}{1-4\sqrt{\sum_{\substack{1\leq i<j\leq A_n}}e^{-2nB(P_i,P_j)}}}\label{eq:achieve final ub},
\end{align} 
where~\eqref{eq:nN} is by Fact 1 (see Appendix~\ref{app:lemma graph proof}) and
\[\frac{N_{r,A_n}}{\left( {n_{A_n}}\right)^{r/2}}=\frac{\binom{A_n}{r}(r-1)! /2}{\left(\binom{A_n}{2}\right)^{r/2}}\leq 4^{r}.\]
As the result,~\eqref{eq:achieve final ub} will go to zero as $n$ goes to infinity if
\[\lim_{n\to \infty}  \sum_{\substack{1\leq i<j\leq A_n}}e^{-2nB(P_i,P_j)} = 0. \]
 \subsection{Proof of~\eqref{eq:ident conv}}\label{app:ident conv}
 We upper bound the denominator of~\eqref{eq:conv lb} by  
\begin{align}
&\!\PP[\xi_{i,j}, \xi_{i,k}]=\PP\left[ \log\frac{P_i(X^n_j)}{P_j(X^n_j)}+\log\frac{P_j(X_i^n)}{P_i(X_i^n)}\geq 0
\cap\ \log\frac{P_i(X^n_k)}{P_k(X^n_k)}+\log\frac{P_k(X_i^n)}{P_i(X_i^n)}\geq 0 \right]\notag
\\&
\leq \PP \left[ \log\frac{P_i(X^n_j)}{P_j(X^n_j)}+\log\frac{P_j(X_i^n)}{P_i(X_i^n)}
+\log\frac{P_i(X^n_k)}{P_k(X^n_k)}+\log\frac{P_k(X_i^n)}{P_i(X_i^n)} \geq 0 \right]\notag
\\&
\leq \exp\Bigg\{n \inf_t  
\log\left( \mathbb{E}\left[ \left(\!\frac{P_i(X^n_j)}{P_j(X^n_j)} \cdot \frac{P_j(X_i^n)}{P_i(X_i^n)} \cdot \frac{P_i(X^n_k)}{P_k(X^n_k)}\cdot \frac{P_k(X_i^n)}{P_i(X_i^n)}\right)^t \right]\right) \Bigg\}\notag
\\&
\!\!\leq \!\exp\left\{\!n \log \mathbb{E}\left[\! \left(\frac{P_i(X^n_j)}{P_j(X^n_j)}\!\cdot\!\frac{P_j(X_i^n)}{P_i(X_i^n)}\cdot \! \frac{P_i(X^n_k)}{P_k(X^n_k)} \cdot\! \frac{P_k(X_i^n)}{P_i(X_i^n)}\!\right)^{\frac{1}{2}} \!\right] \right\}\notag
\\&= \exp\left\{-n B({P_i},{P_j})-nB({P_j},P_k)-nB({P_i},{P_k}) \right\}.\label{eq:covariance}
\end{align}
An upper bound for $\PP\left[\xi_{i,j},\xi_{k,l} \right]$ can be derived similarly.


\subsection{Proof of~\eqref{eq:in app 4}}
\label{app:eq:in app 4}

We find an upper bound on $C(\,.\,,Q_\star,P_{i},Q_i)$ 
by noting that $\mu_{0,i}(t)$ defined in~\eqref{mu} is concave in $t$ with $\mu_{0,i}(1)=0$ and 
\begin{align*}
\frac{\partial \mu_{0,i}(t)}{\partial t}\mid_{t=1}
=-I(P_{i},Q_i)-D([P_{i}Q_i] \parallel Q_\star) \leq 0.
\end{align*}
 Hence $\mu_{0,i}(t)$ is always less than $(I(P_{i},Q_i)+D([P_{i}Q_i] \parallel Q_\star))(1-t)$ and that for $0\leq t \leq 1$ it is always less than $I(P_{i},Q_i)+D([P_{i}Q_i] \parallel Q_\star)$. The inequality in~\eqref{Distance}  follows similarly.
 

\subsection{Proof of~\eqref{eq:noise prob} and~\eqref{eq:code prob}}~\label{appendix:lowerbound converse}
Before deriving lower bounds on~\eqref{eq:noise prob} and~\eqref{eq:code prob}, we note that by the Type-counting Lemma~\cite{csiszarbook}, at the expense of a small decrease in the rate  (which vanishes in the limit for large blocklength) we may restrict our attention to constant composition codewords. Henceforth, we assume that the composition of the codewords for user $i\in [K_n]$ is given by $P_i$. Moreover, to make this paper self-contained, we restate  the following Lemmas that we use in the rest of the proof.
\begin{lemma}[Compensation Identity] For arbitrary $\pi_i: \sum_{i=1}^K \pi_i=1$ and arbitrary probability distribution functions $P_i\in \mathcal{P}_{\mathcal{X}}, i\in[K]$, we define $\bar{P}(x)=\sum_{i=1}^K \pi_i P_i(x)$. Then for any probability distribution function $R$ we have:
\begin{align}
&D\left(\bar{P}\parallel R\right)+\sum_{i=1}^K\pi_i D\left(P_i\parallel \bar{P}\right)
=\sum_{i=1}^K\pi_iD(P_i||R).
\end{align}
\end{lemma}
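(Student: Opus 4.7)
The plan is to prove the compensation identity by direct algebraic manipulation, starting from the right-hand side and using the mixture definition of $\bar{P}$ together with a log-split. I would expand
\[
\sum_{i=1}^K \pi_i D(P_i \parallel R) = \sum_{i=1}^K \pi_i \sum_{x \in \mathcal{X}} P_i(x) \log \frac{P_i(x)}{R(x)},
\]
and then multiply and divide inside the logarithm by $\bar{P}(x)$, writing $\log \frac{P_i(x)}{R(x)} = \log \frac{P_i(x)}{\bar{P}(x)} + \log \frac{\bar{P}(x)}{R(x)}$. This splits the sum into two pieces.

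The first piece is exactly $\sum_{i=1}^K \pi_i D(P_i \parallel \bar{P})$ by the definition of KL divergence. For the second piece, I would swap the order of the $i$-sum and the $x$-sum,
\[
\sum_{i=1}^K \pi_i \sum_{x} P_i(x) \log \frac{\bar{P}(x)}{R(x)} = \sum_{x} \Bigl(\sum_{i=1}^K \pi_i P_i(x)\Bigr) \log \frac{\bar{P}(x)}{R(x)},
\]
and then use the definition $\bar{P}(x) = \sum_{i=1}^K \pi_i P_i(x)$ to collapse the inner sum, yielding $\sum_x \bar{P}(x) \log \frac{\bar{P}(x)}{R(x)} = D(\bar{P} \parallel R)$. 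Adding the two pieces reproduces the claimed identity after rearranging the terms.

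There is essentially no obstacle here: the identity is a routine consequence of (a) the linearity of the mixture $\bar{P}$ in its components, (b) the additivity of the logarithm under the ratio rewrite, and (c) the fact that $\sum_x P_i(x) = 1$ is not even needed since the manipulation only uses $\sum_i \pi_i P_i(x) = \bar{P}(x)$ pointwise. The only minor care needed is to check that the sums are well-defined — this follows from the usual convention that $0 \log 0 = 0$ and that the divergences are read as $+\infty$ whenever the support condition $R(x) = 0 \Rightarrow \bar{P}(x) = 0 \Rightarrow P_i(x) = 0$ (for $i$ with $\pi_i > 0$) fails, in which case both sides are $+\infty$ and the identity holds trivially. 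Hence the proof is a three-line calculation and no inequality or convexity argument is required.
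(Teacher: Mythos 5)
Your proof is correct and complete: the log-split $\log\frac{P_i(x)}{R(x)}=\log\frac{P_i(x)}{\bar{P}(x)}+\log\frac{\bar{P}(x)}{R(x)}$ followed by exchanging the order of summation and collapsing $\sum_i \pi_i P_i(x)=\bar{P}(x)$ is exactly the standard argument, and your remarks on the $0\log 0$ convention and the infinite-divergence case handle the only edge conditions. The paper itself states this lemma without proof (it is restated as a known auxiliary result), so there is nothing to compare against; your calculation is the canonical derivation and fills that gap correctly.
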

\begin{lemma}[Fano]
\label{lemma:fano}
Let $F$ be an arbitrary set of size $N$. 
For $\bar{P}=\frac{\sum_{\theta\in F} P_\theta}{N}$ we have
\begin{align}
\frac{1}{N}\sum_{\theta\in F}D\left(P_\theta\parallel \bar{P}\right) \geq \left(1-\bar{r}\right)\log\left(N(1-\bar{r})\right) +\bar{r}\log\left( \frac{N\bar{r}}{N-1}\right),
\end{align}
 where
\begin{align}\bar{r}:=\inf_T \frac{1}{N}\sum_{\theta\in F} P_\theta\left\{T\neq \theta\right\}\end{align}
in which the infimum is taken over all possible estimators $T$.
\end{lemma}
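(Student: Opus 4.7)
The plan is to recognize the left-hand side as a mutual information, apply the data-processing inequality, and then invoke the standard Fano inequality on the induced Markov chain $\Theta \to X \to T(X)$. Specifically, introduce a random variable $\Theta$ uniformly distributed over the set $F$, and let $X$ be drawn from $P_\Theta$ given $\Theta = \theta$. Then the marginal of $X$ is exactly $\bar{P}$, and by the standard identity for mutual information written as average divergence from the mixture,
\begin{equation*}
I(\Theta;X) \;=\; \frac{1}{N}\sum_{\theta\in F} D\bigl(P_\theta \,\|\, \bar{P}\bigr).
\end{equation*}

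Next, for any (possibly randomized) estimator $T:\mathcal{X}\to F$, the data processing inequality yields $I(\Theta;X)\ge I(\Theta;T(X))$. Writing $\hat\Theta := T(X)$ and using $H(\Theta)=\log N$ (since $\Theta$ is uniform on $F$), I would expand $I(\Theta;\hat\Theta)=\log N - H(\Theta\mid\hat\Theta)$ and apply the classical Fano inequality $H(\Theta\mid\hat\Theta)\le h(r_T)+r_T\log(N-1)$, where $r_T:=\frac{1}{N}\sum_{\theta\in F}P_\theta\{T\neq\theta\}$ is the per-estimator error probability. Substituting gives
\begin{equation*}
\frac{1}{N}\sum_{\theta\in F} D\bigl(P_\theta \,\|\, \bar{P}\bigr) \;\ge\; \log N - h(r_T) - r_T\log(N-1),
\end{equation*}
which, after regrouping the $\log N$ term as $(1-r_T)\log N + r_T\log N$ and absorbing the entropy terms, can be rewritten in the target form $(1-r_T)\log(N(1-r_T))+r_T\log\bigl(\tfrac{N r_T}{N-1}\bigr)$.

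Finally, I would take the infimum over $T$. The right-hand side as a function of the error probability $r$ is monotonically decreasing on $r\in[0,(N-1)/N]$ (its derivative with respect to $r$ equals $\log\frac{r}{(1-r)(N-1)}$, which is negative in this range), so the strongest lower bound is obtained by taking $r$ as small as possible; this is precisely $\bar r=\inf_T r_T$, yielding the claimed inequality. The only mildly delicate point is the algebraic manipulation of the binary entropy into the stated form, which is a routine rearrangement; everything else is a direct application of well-known inequalities, so I do not anticipate a serious obstacle.
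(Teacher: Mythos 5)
Your proof is correct. Note that the paper itself states this lemma without proof (it is introduced with ``we restate the following Lemmas''), so there is no in-paper argument to compare against; your derivation --- writing $\frac{1}{N}\sum_{\theta}D(P_\theta\parallel\bar P)$ as $I(\Theta;X)$ for $\Theta$ uniform on $F$, applying data processing to $\Theta\to X\to T(X)$, invoking the classical Fano bound $H(\Theta\mid\hat\Theta)\le h(r_T)+r_T\log(N-1)$, and rearranging $\log N-h(r_T)-r_T\log(N-1)$ into $(1-r_T)\log(N(1-r_T))+r_T\log\bigl(\tfrac{Nr_T}{N-1}\bigr)$ --- is the standard one and is exactly what is needed. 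The algebraic rearrangement checks out, as does the derivative computation $f'(r)=\log\frac{r}{(1-r)(N-1)}$. The only point worth tightening is the passage to the infimum: since $\bar r=\inf_T r_T$ need not be attained, you should apply the bound to a sequence of estimators $T_k$ with $r_{T_k}\to\bar r$ and use continuity of the right-hand side in $r$ (your monotonicity observation then gives the same conclusion); this is a one-line fix and does not affect the validity of the argument.
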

We now continue with the proof of~\eqref{eq:noise prob}. 
Using the Chernoff bound we can write
\begin{align}
Q_\star^n&\left[ \log \frac{\py}{Q_\star^n}(Y^n)\geq \frac{1}{F_n}\sum_{i=1}^{K_n}\sum_{m=1}^{M_n}D\left(\qiml \parallel Q_\star^n\right)-D\left(\qiml \parallel \py \right)\right]\overset{\text{Chernoff}}{\doteq } e^{-\sup_t A(t)}.
\end{align}
The Chernoff bound exponent, $\sup_t A(t)$, is expressed and simplified as follows
\begin{align}
A(t)&:= \frac{t}{F_n}\sum_{i=1}^{K_n}\sum_{m=1}^{M_n}D\left(\qiml \parallel Q_\star^n\right)-D\left(\qiml \parallel \py \right)-\log\mathbb{E}_{Q_\star^n}\left[\left( \frac{\py}{Q_\star^n}(Y^n)\right)^t \right]\notag\\
&= \frac{t}{F_n}\sum_{i=1}^{K_n}\sum_{m=1}^{M_n} \sum_{y^n}\qiml(y^n)\log\frac{\py(y^n)}{Q_\star^n(y^n)}-\log\sum_{y^n}\left(\py(y^n)\right)^t\left(Q_\star^n(y^n)\right)^{1-t} \notag\\
&= \frac{1}{F_n}\sum_{i=1}^{K_n}\sum_{m=1}^{M_n} \sum_{y^n}\qiml(y^n)\log\frac{\frac{\left(\py(y^n)\right)^t\left(Q_\star^n(y^n)\right)^{1-t}}{\sum_{y^n}\left(\py(y^n)\right)^t\left(Q_\star^n(y^n)\right)^{1-t}}}{Q_\star^n(y^n)} \notag\\
&= \frac{1}{F_n}\sum_{i=1}^{K_n}\sum_{m=1}^{M_n} \sum_{y^n}\qiml(y^n)\log\frac{\left(\py\right)_t(y^n)}{Q_\star^n(y^n)} \notag\\
&= \frac{1}{F_n}\sum_{i=1}^{K_n}\sum_{m=1}^{M_n} D\left( \qiml\parallel Q_\star^n\right)-D\left( \qiml\parallel \left(\py\right)_t\right)\notag\\
&= \frac{1}{K_n}\sum_{i=1}^{K_n} nD\left( {Q_i}_{\lambda_i}\parallel Q_\star\vert P_i\right)- \frac{1}{F_n}\sum_{i=1}^{K_n}\sum_{m=1}^{M_n} D\left( \qiml\parallel \left(\py\right)_t\right),\label{eq:constant comp result}
\end{align}
where~\eqref{eq:constant comp result} is the result of constant composition structure of the codewords.
As a result,
\begin{align}\sup_t A(t)=\frac{1}{K_n}\sum_{i=1}^{K_n} nD\left( {Q_i}_{\lambda_i}\parallel Q_\star\vert P_i\right)-\inf_t \left\{\frac{1}{F_n}\sum_{i=1}^{F_n}D\left( \qiml\parallel \left(\py\right)_t\right)\right\}.\end{align}
Moreover,
\begin{align}
 &\inf_t\frac{1}{F_n}\sum_{i=1}^{K_n}\sum_{m=1}^{M_n}D\left( \qiml\parallel \left(\py\right)_t\right)\\
&=\inf_t\frac{1}{F_n}\sum_{i=1}^{K_n}\sum_{m=1}^{M_n}\sum_{y^n}\qiml(y^n)\log\frac{\qiml(y^n)}{\left(\py\right)_t(y^n)}\\
&=\inf_t\frac{1}{F_n}\sum_{i=1}^{K_n}\sum_{m=1}^{M_n}\sum_{y^n}\qiml(y^n)\log\frac{\qiml(y^n)}{\left(\py\right)_t(y^n)}\cdot\frac{{\pyl}(y^n)}{\pyl(y^n)}\\
&=\inf_t\frac{1}{F_n}\sum_{i=1}^{K_n}\sum_{m=1}^{M_n} D\left(\qiml \parallel \pyl\right)+D\left(\pyl\parallel \left(\py\right)_t \right)\\
&\geq \frac{1}{F_n}\sum_{i=1}^{K_n}\sum_{m=1}^{M_n} D\left(\qiml \parallel \pyl\right).
\end{align}
Note that $\pyl$ is the average of $\qiml$ over $m, i$'s ($m\in [M_n],i\in[K_n]$) and hence based on Lemma~\ref{lemma:fano}, we have
\begin{align}
\frac{1}{F_n}\sum_{i=1}^{K_n}\sum_{m=1}^{M_n} D\left(\qim \parallel \pyl\right)&\geq  \left(1-\widebar{r}_n\right)\log\left(F_n(1-\widebar{r}_n)\right) +\widebar{r}_n\log\left( \frac{F_n\widebar{r}_n}{F_n-1}\right)
\\&\geq \left(1-\widebar{r}_n\right)\log F_n - h(\bar{r}_n),
\end{align}
where $h(.)$ is the binary entropy function. 
As a result
\begin{align}
\sup_t A(t)& \leq \frac{1}{K_n}\sum_{i=1}^{K_n} nD({Q_i}_{\lambda_i}\parallel Q_\star \vert P_i)-n(R+\nu)(1-\widebar{r}_n)+h(\bar{r}_n).
\end{align}
Now we continue with the proof of~\eqref{eq:code prob}. Again, using the Chernoff bound we have
\begin{align}
&\py\left[\log \frac{P_{Y^n}}{Q_\star^n}(Y^n)\leq\frac{1}{F_n}\sum_{i=1}^{K_n}\sum_{m=1}^{M_n}D\left(\qiml \parallel Q_\star^n\right)-D\left(\qiml \parallel \py \right) \right]\\
&=\py\left[\log \frac{Q_\star^n}{P_{Y^n}}(Y^n)\geq  \frac{1}{F_n}\sum_{i=1}^{K_n}\sum_{m=1}^{M_n}D\left(\qiml \parallel \py \right)-D\left(\qiml \parallel Q_\star^n\right) \right] \doteq e^{-\sup_t B(t)},
\end{align}
where
\begin{align}
\sup_tB(t)&:= \sup_t \frac{t}{F_n}\sum_{i=1}^{K_n}\sum_{m=1}^{M_n} \sum_{y^n}\qiml(y^n)\log \frac{Q_\star^n(y^n) \py(y^n)}{\py(y^n)} -\log \sum_{y^n}\left(Q_\star^n(y^n)\right)^t \left(\py(y^n) \right)^{1-t}\notag\\
&=\sup_t\sum_{y^n} \pyl(y^n)\log \frac{\left( \py\right)_{1-t}(y^n)}{\py(y^n)}\notag\\
&=\sup_t D\left( \pyl\parallel \py \right)-D\left( \pyl \parallel \left(\py\right)_{1-t} \right)\notag\\
&\leq D\left( \pyl\parallel \py \right) =\sum_{y^n} \left( \frac{1}{F_n}\sum_{i=1}^{K_n}\sum_{m=1}^{M_n}\qiml(y^n) \right)\log\left(\frac{\frac{1}{F_n}\sum_{i=1}^{K_n}\sum_{m=1}^{M_n}\qiml(y^n)}{\frac{1}{F_n}\sum_{i=1}^{K_n}\sum_{m=1}^{M_n}\qim(y^n)}\right)\notag\\
&\leq \sum_{y^n} \frac{1}{F_n}\sum_{i=1}^{K_n}\sum_{m=1}^{M_n}\qiml(y^n)\log \frac{\qiml(y^n)}{\qim(y^n)}\label{eq:logsum}
=\frac{1}{F_n}\sum_{i=1}^{K_n}\sum_{m=1}^{M_n}D\left(\qiml\parallel \qim \right) \\
&= \frac{1}{K_n}\sum_{i=1}^{K_n} nD\left( {Q_i}_{\lambda_i}\parallel Q\vert P_i\right),
\end{align}
and where the inequality in~\eqref{eq:logsum} is by Log-Sum inequality.

\subsection{Proof of Lemma~\ref{lemma:no overlapping}}~\label{appendix: no overlapping}
 We will prove the Lemma by contradiction. 

Define \begin{align}t_i \triangleq \text{\textit{Number of users in block }}\, i.\end{align} 
Assume that the arrangement with highest probability (let us call it $\mathcal{A}$) has at least two blocks, say blocks $1,2$, for which $t_1-t_2>1$. This assumption means that the arrangement with the highest probability is {\it not} the non-overlapping arrangement.

The probability of this arrangement, $\PP(\mathcal{A})$, is proportional to  
\begin{align}
\PP(\mathcal{A}) &\propto \binom{K_n}{t_1} \binom{K_n-t_1}{t_2}=\frac{K_n!}{t_1!(K_n-t_1)!}\frac{(K_n-t_1)!}{t_2!(K_n-t_1-t_2)!}\\
&=\frac{K_n!}{t_1! t_2!(K_n-t_1-t_2)!}.
\end{align}

We now consider a new arrangement, $\mathcal{A}_{\text{new}}$,  in which $t_{1,\text{new}}=t_1-1$ and $t_{2,\text{new}}=t_2+1$ and all other blocks remain unchanged. This new arrangement is also feasible since we have not changed the number of users. The probability of this new arrangement is proportional to
\begin{align}
\PP(\mathcal{A}_{\text{new}})& \propto \binom{K_n}{t_1-1} \binom{K_n-t_1-1}{t_2+1}\\
&=\frac{K_n!}{(t_1-1)!(K_n-t_1+1)!}\frac{(K_n-t_1+1)!}{(t_2+1)!(K_n-t_1-t_2)!}\\
&=\frac{K_n!}{(t_1-1)! (t_2+1)!(K_n-t_1-t_2)!}.
\end{align}
Comparing $\PP(\mathcal{A})$ and $\PP(\mathcal{A}_{\text{new}})$ we see that $\PP(\mathcal{A})<\PP(\mathcal{A}_{\text{new}})$ which is a contradiction to  our primary assumption that $\mathcal{A}$ has the highest probability among all arrangements. Hence there do not exist  two blocks which differ more than one in the number of active users within them in the arrangement with the highest probability.

\bibliography{refs}
\bibliographystyle{IEEEtran}

\end{document}